\documentclass[12pt]{amsart}
\usepackage{amsmath,amssymb,amscd}
\usepackage{graphicx}
\usepackage{braket}

\makeatletter
\@addtoreset{equation}{section}
\makeatother
\usepackage{enumerate}

\usepackage{color}

\def\CC{{\mathbb C}}

\def\KK{{\mathbb K}}

\def\ZZ{{\mathbb Z}}
\def\QQ{{\mathbb Q}}

\def\ii{{\mathfrak{i}}}

\def\cU{{\mathcal U}}

\def\frki{{\mathfrak i}}

\def\rU{{\mathrm U}}

\def\hPsi{{\widehat \Psi}}
\def\hPhi{{\widehat \Phi}}

\def\det{{\mathrm {det}}}

\def\hcnj{{\, {}^*\negthinspace}}

\def\book#1{\rm{#1}, }
\def\paper#1{\textit{#1}, }
\def\jour#1{\rm{#1}, }
\def\yr#1{({\rm{#1}) }}
\def\vol#1{\textbf{#1}}
\def\pages#1{\rm{#1}}

\def\publaddr#1{\rm{#1}, }
\def\publ#1{\rm{#1}, }
\def\by#1{{\rm{#1}, }}

\usepackage[utf8]{inputenc}

\newtheorem{definition}{Definition}[section]

\newtheorem{theorem}[definition]{Theorem}
\newtheorem{proposition}[definition]{Proposition}
\newtheorem{corollary}[definition]{Corollary}
\newtheorem{remark}[definition]{Remark}

\def\dfrac#1#2{{\displaystyle\frac{#1}{#2}}}

\def\book#1{\rm{#1}, }
\def\paper#1{\textit{#1}, }
\def\jour#1{\rm{#1}, }
\def\yr#1{({\rm{#1}) }}

\def\vol#1{\textbf{#1}}
\def\pages#1{\rm{#1}}

\def\publaddr#1{\rm{#1}, }
\def\publ#1{\rm{#1}, }
\def\by#1{{\rm{#1}, }}

\begin{document}

\title{Wave-particle Complementarity in Quantum Walk from a quantum mechanical perspective}

\author{Norio Konno}
\address{College of Science and Engineering, Ritsumeikan University, 1-1-1 Noji-higashi, Kusatsu, 525-8577, Shiga, Japan}
\email{n-konno@fc.ritsumei.ac.jp}

\author{Shigeki Matsutani}
\address{Graduate School of Natural Science \& Technology, Kanazawa University, Kakuma, Kanazawa, 920-1192, Ishikawa, Japan}
\email{s-matsutani@se.kanazawa-u.ac.jp}

\author{Etsuo Segawa}
\address{Graduate School of Environment and Information Sciences, Yokohama National University, 79-1 Tokiwadai, Hodogaya, Yokohama, 240-8501, Kanagawa, Japan}
\email{segawa-etsuo-tb@ynu.ac.jp}

\author{Tomoyuki Terada}
\address{Corporate Management Division, Kanazawa Institute of Technology, 7-1 Ohgigaoka, Nonoichi, 921-8501, Ishikawa, Japan}
\email{t.terada@neptune.kanazawa-it.ac.jp}

\date{\today}

\keywords{quantum walk, wave-particle complementarity, path integral, Konno distribution}

\begin{abstract}
A quantum walk is a mathematical model and a quantum version of a random walk. For the past two decades, it has been studied from mathematical and physical viewpoints.
It has many desirable properties.
This paper reevaluates quantum walk from a quantum mechanical perspective, as Feynman essentially proposed it as a toy model in his path integral method.
Then we demonstrate that the quantum walk model exhibits both particle-like and wave-like properties by showing reflection and partial reflection as a particle and interference as a wave.
In particular, for an initial state supported at two sites $\pm\ell$, we explicitly derive the weak limit distribution of the Hadamard walk and show that the Konno distribution is modulated by the interference factor $2\cos^2(\ell k(x))$. The resulting interference profile is reminiscent of far-field (Fraunhofer-type) diffraction.
\end{abstract}
\maketitle

\section{Introduction}
It is well known that Feynman proposed a mathematical model -- the so-called discrete Dirac equation model -- to help understand the path integral picture in quantum mechanics \cite[p.35, Problem 2-6]{FHS} (See Appendix A).
This model is essentially a prototype of the quantum walk (QW), which is currently being studied in various fields and is the main theme of this paper.
Feynman himself introduced the concept of the path integral in a lecture aimed at a general audience in a more plain way, which was later published as a book \cite{F}. 
There, he also demonstrated that particles described by path integrals obey Fermat's principle via thought experiments.

The QW itself was introduced by Aharonov, Davidovich, and Zagury \cite{ADZ}.

Recently due to the various excellent studies of QW, its mathematical properties and characteristics have been revealed.
The reviews for developments of the theory of QWs are, for example, Kempe \cite{Kem}, Kendon \cite{Ken}, Venegas-Andraca \cite{VAndraca2008, VA}, Konno \cite{Konno2008b, Kon14},  Manouchehri and Wang \cite{ManouchehriWang2013}, Portugal \cite{Portugal2013}. 
Konno analytically showed that the homogeneous scattering by Hadamard type matrix at each point forms the {\lq\lq}wavefront{\rq\rq} in QW, which is known as the Konno distribution \cite{Kon05,Konno2008b}, as mentioned in Theorem \ref{th:Konno}. 
Further, there are so many mathematical investigations on QWs \cite{KS13, GJS}.

\bigskip

On the other hand, complementary properties have been studied in quantum mechanics for a long period of time.
Although there were many arguments about quantum mechanics in the early days, a few models and computations demonstrate both properties using a parameter, e.g. the Planck constant $\hbar$.

However, we found that the QW has several desirable properties as a quantum mechanical model, in addition to being a mathematical and probability model.
This paper demonstrates that in the QW, the particle-like and wave-like properties are inherent.
In other words, the QW may provide a rigorous mathematical foundation for the arguments concerning wave-particle complementarity and fundamental concepts in the quantum mechanics.

In fact, Gudder essentially found an algorithm for the QW on the real axis in order to understand the fundamental properties of quantum mechanics \cite{Gud, GudN}.
However, it was difficult to handle such a model precisely in those days.

Therefore, we should return to re-evaluate the physical properties of the QW as a toy model of quantum mechanics or the path integral, as Feynman partially demonstrated (see Appendix).
We refer to it as the QW model from the quantum mechanical perspective.

Very recently, Fouque, Ichiba and Lam \cite{FIL} reported on research concerning path integrals based on quantum walks (QWs), which was motivated by reasons similar to those in this paper.
While there is no overlap with this paper in terms of research content, their approach --which involves linking fundamental problems in path integrals to QWs-- is extremely interesting.

\bigskip

The contents of this paper are as follows.

Section 2 provides an overview on the QW including the Konno distribution in Theorem \ref{th:Konno} and the review of a method proposed by Grimmett, Janson, and Scudo \cite{GJS, Kon14}.
Section 3 demonstrates that the QW provides single-particle mechanics by considering the peak of the wave profile in numerical QW computations. 
In the QW model, we refer to this peak as the wavefront.
In the QW, we observe a wavefront that behaves like a particle.
By demonstrating the reflection and partial reflection of the wavefront of the QW model as a particle, we show that the quantum state exhibits particle-like behavior, which is reminiscent of the results of Feynman's thought experiments for  Fermat's principle \cite{F}.
Section 4 demonstrates that QWs provide an interference phenomenon by considering quantum states stemming from two points.
Though it reminds us of the double-slit phenomenon of the far-field, it might be related to quantization of the dipole in the near-field type interference like the Talbot effect \cite{BK, MO} because it is characterized by a natural number.
Either way, we observe the interference.
In Section 5, we give the conclusion in this paper and the discussion.
Appendix A gives a short review of Feynman's discrete Dirac equation model, and Appendix B shows the proof of Proposition 2.4.

\section{Quantum Walk: Hadamard walk}

Let us consider the discrete space-time $(x, t) \in \ZZ^2$.
In this paper, $\ZZ$ denotes the set of integers and $\CC$ denotes the set of complex numbers.

The time evolution of the QW is given by the infinite unitary matrix $\cU \in \rU(\infty)$ determined by the sequence of 2-dimensional unitary matrices $(U_x)_{x\in\ZZ}$,
$$
  U_x = \left[\begin{matrix} a_x & b_x \\ c_x & d_x\end{matrix}\right],
$$
where $a_x$, $b_x$, $c_x$, $d_x \in  \CC$ satisfy
$$
|a_x|^2 +|c_x|^2=|b_x|^2+|d_x|^2=1, \quad
a_x\overline{b_x}+c_x\overline{d_x}=0, \quad
c_x=-\det(U_x) \overline{b_x}, \quad
d_x=\det(U_x) \overline{a_x}
$$
due to the unitarity of $U_x$.
Further we set
\begin{equation}
     P_x :=  \left[\begin{matrix} a_x & b_x \\ 0 & 0\end{matrix}\right], \quad
Q_x :=\left[\begin{matrix} 0 & 0 \\ c_x & d_x\end{matrix}\right].
\label{eq:P,Q}
\end{equation}
In the quantum walk literature, $U_x$ is called the quantum coin.
In this paper, we also refer to the distribution $\cU$ of $U_x$ as quantum circumstance.
The $\cU$ acts on the vector
\begin{equation}
\Psi(t):=\left\{\Psi(t,x):=\left[\begin{matrix} \Psi_L(t,x) \\ \Psi_R(t,x)
 \end{matrix}\right] \ \Bigr| \ x \in \ZZ\right\}\in (\CC^2)^{\ZZ} \cong \CC^{\ZZ\times\{L,R\}}
\label{eq:Psi}
\end{equation}
for $t \in \ZZ_{\ge 0}:=\{t \in \ZZ \ |\ t \ge 0\}$ as the recursion relation,
\begin{equation}
    \Psi(t+1) = \cU\Psi(t)
\label{eq:UPsi}
\end{equation}
satisfying the following relation,
\begin{equation}
      \Psi(t+1, x) = P_{x+1}  \Psi(t, x+1)+Q_{x-1}  \Psi(t, x-1).
\label{eq:PQPsi}
\end{equation}
Here $\Psi_j(t,x)$ is a complex number for $j=L, R$ or $\Psi(t, x) \in \CC^2$.
We regard them as $\CC^2$-valued functions $\Psi(\cdot, \cdot): \ZZ\times \ZZ \to \CC^2$ on $\ZZ^2$ or $\Psi(t, \cdot): \ZZ \to \CC^2$ for $t \in \ZZ$, or $\Psi_j(t, \cdot ): \ZZ \to \CC$ on $\ZZ$ for $t\in \ZZ$ and $j \in \{L, R\}$.
This recursion relation (\ref{eq:UPsi}) is referred to as the QW of Ambainis-type \cite{ABNVW}.
We call $\Psi$ the quantum state in this paper.

The system of the QW has two Hermitian inner products, the point-wise, 
$(\Phi(t, x),\Psi(t, x))_{x,t}=\Phi \hcnj(t, x)\Psi(t, x)$ and the global one,
$(\Phi(t), \Psi(t))_t=\Phi\hcnj(t) \Psi(t)=\displaystyle{\sum_{x\in \ZZ}\Phi \hcnj(t, x)\Psi(t, x)}$.
In the QW, the probability distribution is given by the point-wise Hermitian inner product,
$$
      \rho(t,x) =(\Psi(t, x),\Psi(t, x))_{x,t}:=\Psi \hcnj(t, x)\Psi(t, x).
$$
Obviously, if $\displaystyle{\sum_{x\in \ZZ} \rho(0,x)=1}$, we have $\displaystyle{\sum_{x\in \ZZ} \rho(t,x)=1}$ for $t \in \ZZ$, and the positivity $\rho(t,x)\ge 0$.

\bigskip

This picture is closely related to Feynman path integral picture mentioned in the following remark.

\begin{remark}\label{rmk:PIvsQW}
{\rm{
In the QW, the state of $t+1$ at point $x$ is determined by the states of $t$ in the nearest neighborhood of the point $x$, i.e., $x-1$ and $x+1$.
This seems to differ from Feynman's path integral approach.

However, we can interpret the time interval $\Delta t$ as $N$ time steps for the same algorithm of the QW model, i.e., $\Delta t = N$ [time step].
The state at point $x$ at time $t + \Delta t$ is determined by the states at time $t$ within the neighborhood of size $2 \Delta t+1$ around point $x$, i.e., from $x - \Delta t$ to $x + \Delta t$.
In other words, we can describe it as
\begin{equation}
  \Psi(t_0 + \ell \Delta t,x) 
= \sum_{n=-N}^{N} K_{\Delta t}(x, x+n) \Psi(t_0 + (\ell-1) \Delta t, x+n),
\label{eq:PsiKPsi}
\end{equation}
where $K_{\Delta t}(x, y)$ is the kernel function given by $\cU$-action (\ref{eq:UPsi}).
Here (\ref{eq:PsiKPsi}) implicitly appeared in the generating function method in QW \cite{KS13}.

If we take $\Delta t=\infty$, it would correspond to Feynman's picture of the path integral even though we have to investigate whether $K_{\Delta t}$ is connected with a physical model or the Hamiltonian.
This means that the QW could be interpreted as a mathematical model of Feynman's path integral over a finite range. 
At the very least, it has the potential to be a mathematical model.
}}
\end{remark}

\bigskip

\begin{figure}
\begin{center}

\includegraphics[width=0.7\hsize]{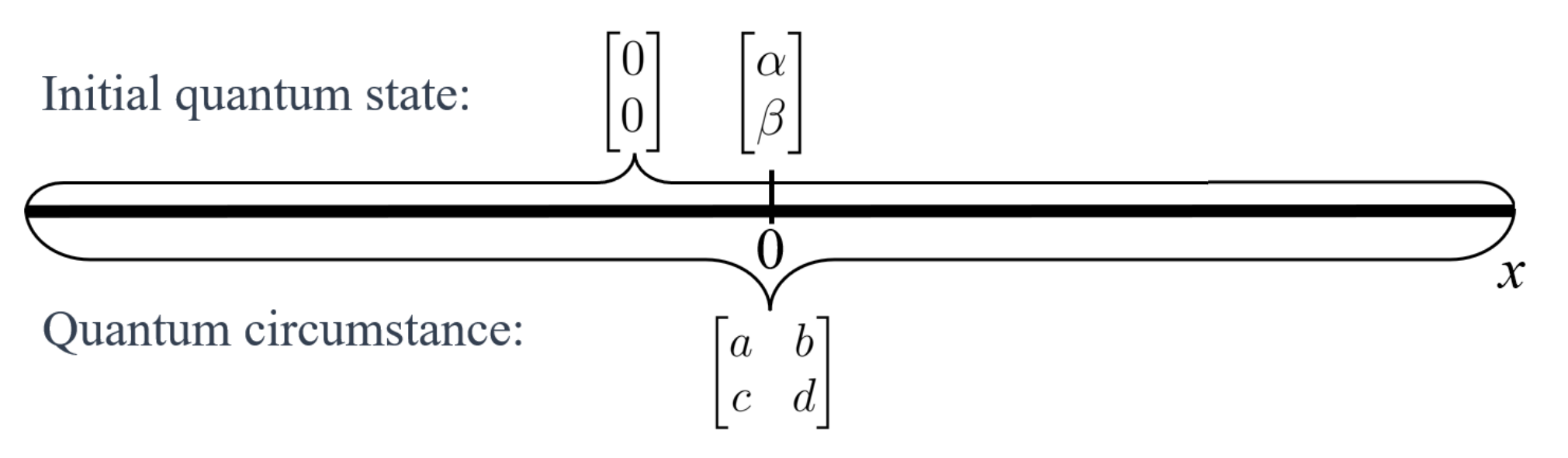}
\

\smallskip

\end{center}

\caption{
Constant quantum circumstance.
}\label{fg:system_Kdist}
\end{figure}

Konno mathematically determined the envelope curve of the  distribution $\rho(t,x)$ \cite{Kon05}:
\begin{theorem}[\cite{Kon05}]\label{th:Konno}
Suppose $abcd \neq 0$.
For the initial state
$\Psi(0,x)= \varphi \delta_{x,0}$, 
$\varphi = \left[\begin{matrix} \alpha \\ \beta \end{matrix}\right]\in\mathbb{C}^{2}$ with $|\alpha|^2+|\beta|^2=1$, and the quantum circumstance (quantum coin) as in Figure \ref{fg:system_Kdist},
$$
U_x= U= \left[\begin{matrix}
a & b \\ c & d\end{matrix}\right], \quad x \in \ZZ.
$$
Let $X_t$ be a random variable following $\rho(t,\cdot)$ for $t \in \ZZ_{\ge 0}$.
Then we have the probability of the quantum state (q-bits) as
$$
\lim_{t\to \infty}
P\left(u \le  \frac{X_t}{t} \le v\right)
=\int^v_u f_{K}(x) dx \quad \text{for any } u \le v,
$$
where
$$
f_K(x) = \frac{\sqrt{1-|a|^2}}
  {\pi (1-x^2)\sqrt{|a|^2 -x ^2}}
I_{(-|a|,|a|)}(x) \times
      \left[ 1 - \left(
  |\alpha|^2 -|\beta|^2 +
\frac{a\alpha \overline{b\beta} + \overline{a\alpha}b\beta}
{|a|^2}\right) x \right].
$$
\end{theorem}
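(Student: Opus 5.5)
We follow the Fourier (Grimmett--Janson--Scudo) approach, which the paper reviews in this section. Set $\hPsi(t,k)=\sum_{x\in\ZZ}\Psi(t,x)e^{ikx}$ for $k\in[-\pi,\pi)$. The recursion (\ref{eq:PQPsi}) with $P_x=P$, $Q_x=Q$ becomes $\hPsi(t+1,k)=\widehat U(k)\hPsi(t,k)$, where
$$
\widehat U(k)=\left[\begin{matrix} e^{-ik} & 0\\ 0 & e^{ik}\end{matrix}\right]U .
$$
Since $\Psi(0,x)=\varphi\delta_{x,0}$, we get $\hPsi(t,k)=\widehat U(k)^t\varphi$. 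The matrix $\widehat U(k)$ is unitary. We diagonalize it as $\widehat U(k)=\sum_{j=1,2}\lambda_j(k)\,v_j(k)v_j(k)\hcnj$ with $\lambda_j(k)=e^{i\theta_j(k)}$. To do this we solve $\lambda^2-(ae^{-ik}+de^{ik})\lambda+\det U=0$. The hypothesis $abcd\neq0$ guarantees that the eigenvalues are distinct and that $\theta_j$ is a nonconstant smooth function of $k$.

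Next we compute the moments. Let $D=-i\,\partial_k$, so that $\sum_x x^r\rho(t,x)=\frac{1}{2\pi}\int \hPsi\hcnj(t,k)\,D^r\hPsi(t,k)\,dk$. Expanding $D^r(\lambda_j^t v_j)$, the leading term as $t\to\infty$ is $t^r(\theta_j'(k))^r\lambda_j^t v_j$. The cross terms between $j=1$ and $j=2$ contain the oscillating factor $(\lambda_1\overline{\lambda_2})^t$. Since $\theta_1-\theta_2$ is nonconstant, the Riemann--Lebesgue lemma (or an integration by parts) sends these terms to zero after division by $t^r$. We therefore obtain
$$
E\left[(X_t/t)^r\right]\to \frac{1}{2\pi}\int_{-\pi}^{\pi}\sum_{j}(\theta_j'(k))^r\,|v_j(k)\hcnj\varphi|^2\,dk .
$$
This identifies the limit law of $X_t/t$ as the law of $h_j(K)$, where $h_j=\theta_j'$ and $K$ is uniform on $[-\pi,\pi)$ with weights $|v_j\hcnj\varphi|^2$. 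The limit distribution has compact support in $[-1,1]$, hence is determined by its moments, and the method of moments gives convergence in distribution.

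It remains to compute the density, and this is the main obstacle. We write $a=|a|e^{i\sigma}$ and absorb phases so that $\cos\theta_j$ is an explicit function of $\cos(k-\sigma')$. Differentiating the characteristic equation gives $h_j(k)=\mp |a|\sin(k-\sigma')/\sqrt{1-|a|^2\cos^2(k-\sigma')}$. Its range is $(-|a|,|a|)$, which explains the indicator. We then change variables $x=h_j(k)$. Each value $x$ has two preimages per branch, and summing $|dk/dx|$ over the preimages produces the factor
$$
\frac{\sqrt{1-|a|^2}}{\pi(1-x^2)\sqrt{|a|^2-x^2}} .
$$
The last step is to express the weights $|v_j\hcnj\varphi|^2$ at the preimages in terms of $x$. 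They are linear in $|\alpha|^2-|\beta|^2$ and in $\mathrm{Re}(a\alpha\overline{b\beta})$, with coefficients depending on $k$. The terms odd under the symmetry between the preimages cancel in the sum, leaving the factor $1-(\cdots)x$ in the statement.

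The heavy part is the bookkeeping of the eigenvector normalization and the phases of $a,b,d$. As a check, we will first carry out the computation for a real (Hadamard-type) coin, and then verify that the density integrates to one and has the predicted mean $-\bigl(|\alpha|^2-|\beta|^2+\cdots\bigr)\int x^2 f$.
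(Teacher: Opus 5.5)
Your proposal is correct, but it does not follow the paper's route for this statement. The paper does not prove the general-coin theorem itself. It defers to Konno's original combinatorial argument, which expands $\cU^t$ into sums over paths of products of $P$ and $Q$ and counts them. The Fourier/GJS argument of Theorem \ref{th:GJS} is carried out only for the Hadamard coin, in Proposition \ref{pr:KonnoDist} and Appendix B. You are in effect doing Appendix B for an arbitrary coin with $abcd\neq 0$.

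The path-counting proof gives exact finite-$t$ amplitudes, at the cost of heavy combinatorics. Your spectral route gets the limit law directly as the push-forward of $|v_j^*\varphi|^2\,dk/2\pi$ under the group velocity. Everything then reduces to one eigenvector computation and a change of variables.

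Two remarks on your execution.

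First, the Riemann--Lebesgue step is not needed. At order $t^r$ the cross terms carry the factor $v_1^*v_2=0$ and vanish identically; this is how the paper's moment formula arises. The remainder is $O(t^{r-1})$ uniformly in $k$.

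Second, the individual weights are not functions of $|\alpha|^2-|\beta|^2$ and $\mathrm{Re}(a\alpha\overline{b\beta})$ alone. The bookkeeping you postpone does close, as follows.
Take $v_j=(b,\ \lambda_je^{ik}-a)^T$. At $x=h_j(k)$ one finds $\mathrm{Re}(\bar a\lambda_je^{ik})=(|a|^2-x)/(1-x)$. Hence $|v_j|^2=2|b|^2/(1-x)$ and $|\lambda_je^{ik}-a|^2=|b|^2(1+x)/(1-x)$.
The only term not fixed by $x$ is proportional to $\mathrm{Im}(a\alpha\overline{b\beta})\,\mathrm{Im}(\bar a\lambda_je^{ik})$. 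It changes sign between the two preimages on each branch, because they differ in the sign of $\cos(k-\sigma')$.
So the four weights sum to $2(1-Cx)$, with $C$ the coefficient in the statement. The Jacobian $|dk/dx|=\sqrt{1-|a|^2}/\bigl((1-x^2)\sqrt{|a|^2-x^2}\bigr)$ is the same at all four preimages. Together with the factor $1/2\pi$, this gives exactly $f_K$.
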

\bigskip

\begin{proof}
See \cite{Kon05}.
Even though Konno proved this by means of a combinatorial method, we show its proof of a special case following \cite{Kon05} as Proposition \ref{pr:KonnoDist} and Appendix B.
\end{proof}

Especially, we are concerned with the Hadamard walk such that
\begin{equation}
 U_x = \frac{1}{\sqrt{2}}\left[\begin{matrix} 1 & 1 \\ 1 & -1\end{matrix}\right].
\label{eq:Harmard}
\end{equation}

\subsection{Introduction to the GJS method}

The method proposed by Grimmett, Janson and Scudo provides an elegant analytic framework for deriving the weak limit distribution of discrete-time QWs \cite{GJS}.
We call it the GJS-method.

Since the GJS method plays a crucial role in Proposition \ref{pr:KonnoDist} and Theorem \ref{thm:Wslit}, in this section we outline the general argument and apply it explicitly to the Hadamard walk, following the spectral formulation.

Let $\Psi(t,x) = \displaystyle{\left[\begin{matrix}\Psi_{L}(t,x)\\ \Psi_{R}(t,x) \end{matrix}\right]} \in \mathbb{C}^2$ denote the complex vector as the state  at site $x\in\ZZ$ and time $t \in\ZZ$,
where the subscripts $L$ and $R$ correspond to the left and right states in the QW-model, respectively.

Let $\KK=[-\pi,\pi)$.
For $k \in \KK$, we consider the Fourier transform of each component by
$$
  \widehat{\Psi}_{j}(t,k)
  = \sum_{x\in\mathbb{Z}} e^{-ikx}\Psi_{j}(t,x),
  \qquad (j=L, R)
$$
with the inverse relation,
$$
  \Psi_{j}(t,x)
  = \int_{-\pi}^{\pi} e^{ikx}\widehat{\Psi}_{j}(t,k) \frac{dk}{2\pi}.
$$
Then for $k \in \KK$, the Fourier-space state can be expressed as
$$
  \widehat{\Psi}(t,k)
  =
  \begin{bmatrix}
    \widehat{\Psi}_{L}(t,k) \\[2pt]
    \widehat{\Psi}_{R}(t,k)
  \end{bmatrix}
  \in \mathbb{C}^2,
$$
and thus the Fourier-space evolution takes the form
$$
  \widehat{\Psi}(t+1,k) = U(k)\widehat{\Psi}(t,k),
$$
where $U(k)\in \rU(2)$ denotes the Fourier representation of the one-step evolution operator.
The Fourier-space also admits two Hermitian inner products, the point-wise 
$(\hPhi(t, k),\hPsi(t, k))_{k,t}=\hPhi\hcnj(t, k)\hPsi(t, k)$ and the global one,
$(\hPhi(t), \hPsi(t))_t=\displaystyle{ \int_{-\pi}^{\pi}  \hPhi \hcnj(t, k)\hPsi(t, k) \frac{dk}{2\pi}}$.
Further, for later convenience, we also prepare the Hermitian point-wise inner product $(u(k), v(k))_k = u(k)^\hcnj v(k)$ for $u(k) \in \CC^2$.
Let $\lambda_j(k)$ and $v_{j}(k)$ $(j=1,2)$ be the eigenvalues and the corresponding eigenvectors of $U(k)$, chosen so that $\{v_1(k), v_2(k)\}$ is an orthonormal basis of $\CC^2$ for each $k \in \KK$:
$$
      U(k) v_j(k) = \lambda_j(k) v_j(k), \quad j = 1,2.
$$
In other words, we write the spectral decomposition of $U(k)$ as 
$$
U(k)=\sum_{j=1}^{2}\lambda_j(k) v_j(k)\otimes_\CC v_j^*(k).
$$
Then the Fourier-space amplitude can be expressed as
\begin{align*}
\widehat{\Psi}(t,k) &= U(k)^{t} \widehat{\Psi}(0,k)  \\
&=\lambda_{1}(k)^{t} ( v_{1}(k) ,  \widehat{\Psi}(0,k))_{k,0}   v_{1}(k)
 + \lambda_{2}(k)^{t} ( v_{2}(k) ,  \widehat{\Psi}(0,k))_{k,0} v_{2}(k),
\end{align*}
where the initial state (initial coin) is $\widehat{\Psi}(0,k)=\displaystyle{\left[
\begin{matrix}\alpha\\ \beta\end{matrix}\right]}$ with $|\alpha|^2+|\beta|^2=1$.

Let $D:=i\,\dfrac{d}{dk}$. For the position random variable $X_{t}\in\mathbb{Z}$, its $r$-th moment admits the standard representation
$$
E(X_{t}^{r}) = \int_{-\pi}^{\pi} \frac{dk}{2\pi} \widehat{\Psi}^{\ast}(t,k) D^{r} \widehat{\Psi}(t,k).
$$
The GJS--method expresses the large-$t$ asymptotics of $E(X_{t}^{r})$.
The key quantity is the group velocity
\[
h_j(k)=\frac{D\lambda_j(k)}{\lambda_j(k)} \qquad (j=1,2).
\]
Repeated differentiation of the spectral decomposition gives
$$
D^{r} \widehat{\Psi}(t,k) = \sum_{j=1}^{2} (t)_{r}  \lambda_{j}(k)^{t-r} \bigl( D  \lambda_{j}(k) \bigr)^{r} (v_{j}(k),  \widehat{\Psi}(0,k))_{k,0} v_{j}(k)  + O(t^{r-1}),
$$
where $(t)_{r}=t(t-1)\cdots(t-r+1)$.
Substituting this into the moment expression yields
$$
E \left( \left( \frac{X_{t}}{t} \right)^{r} \right) = \int_{-\pi}^{\pi} \sum_{j=1}^{2} \frac{(t)_{r}}{t^r} \left( \frac{D\lambda_{j}(k)}{\lambda_{j}(k)} \right)^{r} \lvert( v_{j}(k), \widehat{\Psi}(0,k))_{k,0}\rvert^2 \frac{dk}{2\pi} + O(t^{-1}).
$$

\begin{theorem}\label{th:GJS}
Let $X_{t}$ denote the QW at time $t$.
We define $h: \Omega = \mathbb{K} \times \{1,2 \} \to \mathbb{R}$ by
\[
h(k,j) = h_{j}(k) = \frac{D\lambda_{j}(k)}{\lambda_{j}(k)}.
\]
Then, we have
$$
\frac{X_{t}}{t} \Rightarrow Y = h(Z) \quad (t \to \infty),
$$

where $Z$ is a random variable on $\Omega$ whose distribution is given by
$\bigl|(v_j(k), \widehat{\Psi}(0,k))_{k,0}\bigr|^2\, \dfrac{dk}{2\pi}$ on $\KK \times \{1,2\}$.

\end{theorem}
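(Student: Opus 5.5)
We prove the statement by the method of moments. We show that for every $r\in\ZZ_{\ge 0}$
$$
E\left(\left(\frac{X_t}{t}\right)^r\right)\longrightarrow \int_{-\pi}^{\pi}\sum_{j=1}^{2} h_j(k)^r\,\bigl|(v_j(k),\widehat{\Psi}(0,k))_{k,0}\bigr|^2\frac{dk}{2\pi}=E(Y^r),
$$
and then conclude weak convergence. The limit $Y=h(Z)$ is bounded, so its distribution is determined by its moments, and convergence of all moments then gives $X_t/t\Rightarrow Y$.

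\emph{Moment formula.} Since $\widehat{\Psi}_j(t,k)=\sum_x e^{-ikx}\Psi_j(t,x)$, we have $D\widehat{\Psi}_j(t,k)=\sum_x x\,e^{-ikx}\Psi_j(t,x)$. Parseval's identity then gives $E(X_t^r)=\sum_x x^r\rho(t,x)=\int \widehat{\Psi}^*(t,k)D^r\widehat{\Psi}(t,k)\,\frac{dk}{2\pi}$. All sums are finite, because the walk started from finite support stays finitely supported.

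\emph{Regularity.} We need $\lambda_j(k)$ and $v_j(k)$ to be smooth in $k$. For the Hadamard walk (and in general when $abcd\ne0$), $U(k)$ is a trigonometric polynomial matrix with distinct eigenvalues for all $k$, or at worst for finitely many $k$. Away from such points analytic perturbation theory gives real-analytic $\lambda_j,v_j$ on $\KK$, up to finitely many points of measure zero. We fix the phases of $v_j$ smoothly. Since $|\lambda_j|=1$, we get $D\lambda_j/\lambda_j = i\lambda_j'/\lambda_j\in\mathbb{R}$, so $h$ is real and bounded by compactness.

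\emph{Leading term.} Write $\widehat{\Psi}(t,k)=\sum_j\lambda_j^t c_j v_j$ with $c_j=(v_j,\widehat{\Psi}(0,k))_{k,0}$. Applying $D^r$ by the Leibniz rule, the only term with $t$-power $t^r$ arises when all $r$ derivatives fall on $\lambda_j^t$, giving $(t)_r\lambda_j^{t-r}(D\lambda_j)^r c_jv_j$. Every other term has a factor of at most $t^{r-1}$ times bounded smooth functions, so it is $O(t^{r-1})$ uniformly in $k$.

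\emph{Cross terms.} Pairing with $\widehat{\Psi}^*(t,k)=\sum_{j'}\overline{\lambda_{j'}}^{\,t}\overline{c_{j'}}v_{j'}^*$, orthonormality $v_{j'}^*v_j=\delta_{jj'}$ kills the cross terms pointwise in the leading order. The diagonal term is $|\lambda_j|^{2t}=1$ times $h_j^r|c_j|^2$. Dividing by $t^r$, using $(t)_r/t^r\to1$ and dominated convergence, gives the claimed limit.

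\emph{Main obstacle.} The delicate step is justifying the uniform $O(t^{r-1})$ remainder. This includes handling the possible degeneracy points where $\lambda_1(k)=\lambda_2(k)$ and where $v_j$ may fail to be differentiable. My plan is to avoid the spectral decomposition for the derivative bound altogether. We compute $D^rU(k)^t$ directly as a sum of products with at most $r$ factors of $DU$, $D^2U,\dots$. This gives a bound $\|D^r(U(k)^t\widehat{\Psi}(0,k))\|\le C_r t^r$ uniformly in $k$. The spectral form is then used only on the full-measure set where the eigenvalues are simple, and dominated convergence absorbs the exceptional set. For the Hadamard coin the eigenvalues are simple for all $k$, which makes the argument clean there.
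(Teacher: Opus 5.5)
Your proposal is correct and follows the paper's own route. Both use the Parseval moment formula $E(X_t^r)=\int_{-\pi}^{\pi}\widehat{\Psi}^{*}(t,k)D^r\widehat{\Psi}(t,k)\,\frac{dk}{2\pi}$, isolate the leading term $(t)_r\lambda_j^{t-r}(D\lambda_j)^r$ with cross terms removed by orthonormality, and conclude by the method of moments since $h$ is bounded. Your extra care with the uniform remainder (the direct bound $\|D^r\widehat{\Psi}(t,k)\|\le C_r t^r$ combined with dominated convergence away from possible eigenvalue degeneracies) fills in a step the paper only asserts as $O(t^{r-1})$.
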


\begin{proof}
The equation above shows convergence of all moments of $\dfrac{X_{t}}{t}$ to those of $h(Z)$. Since $|h_{j}(k)|\le 1$, the distribution of $h(Z)$ has compact support and is therefore uniquely determined by its moments. Hence, the method of moments yields weak convergence.
\end{proof}

\begin{proposition}\label{pr:KonnoDist}
We let $U(k)\in \rU(2)$ be the Fourier representation of the Hadamard case (\ref{eq:Harmard}),
\[
U(k)=\frac{1}{\sqrt{2}}
\begin{bmatrix}
e^{ik}   &e^{ik} \\
e^{-ik}  &-e^{-ik}
\end{bmatrix},
\]
and the initial state be $\widehat{\Psi}(0,k)=\displaystyle{
\left[\begin{matrix}\alpha \\ \beta\end{matrix}\right]}$
with $|\alpha|^{2}+|\beta|^{2}=1$.
Further we let $X_t\in\mathbb{Z}$ denote the position of the quantum walker (the quantum state arising from $(t,x) = (0,0)$) after $t$ steps.
Then the rescaled position variable $X_t/t$ converges in distribution to a random variable $Y$ supported on $(-1/\sqrt{2},\,1/\sqrt{2})$, whose probability density function is
\begin{align*}
f(y) &= \Bigl\{1-\bigl(|\alpha|^{2}-|\beta|^{2}+\alpha\overline{\beta}+\overline{\alpha}\beta\bigr)y \Bigr\}
\frac{1}{\pi (1-y^2) \sqrt{1-2y^2}} I_{(-1/\sqrt{2},\,1/\sqrt{2})}(y) \\
&= \Bigl\{1-\bigl(|\alpha|^{2}-|\beta|^{2}+\alpha\overline{\beta}+\overline{\alpha}\beta\bigr)y \Bigr\} f_{K}\!\left(y;\,\frac{1}{\sqrt{2}}\right).
\end{align*}
In particular, $Y$ has the Konno distribution $f_{K}(y;1/\sqrt{2})$ modulated by a linear tilt depending on the initial state $(\alpha,\beta)$.
\end{proposition}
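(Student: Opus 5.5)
We apply Theorem \ref{th:GJS} to the Hadamard symbol $U(k)$ stated in the proposition. The plan is to compute its spectral data explicitly, push the measure $|(v_j(k),\hPsi(0,k))_{k,0}|^2\,dk/2\pi$ forward under $h_j$, and change variables from $k$ to $y$.

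\emph{Spectral data.} Since $\det U(k)=-1$ and $\mathrm{tr}\,U(k)=\frac{1}{\sqrt2}(e^{ik}-e^{-ik})=\sqrt2\, i\sin k$, the eigenvalues solve $\lambda^2-\sqrt2 i\sin k\,\lambda-1=0$. Hence
\[
\lambda_{1,2}(k)=\pm\sqrt{1-\tfrac{\sin^2k}{2}}+\tfrac{i}{\sqrt2}\sin k ,
\]
and both have modulus one. Set $\lambda_j=e^{i\theta_j(k)}$. Then $h_j=D\lambda_j/\lambda_j=-\theta_j'(k)$. Differentiating $\sin\theta_1=\sin k/\sqrt2$ gives
\[
h_1(k)=-\frac{\cos k}{\sqrt{1+\cos^2 k}}, \qquad h_2=-h_1 ,
\]
so that $|h_j|<1/\sqrt2$ (up to an overall sign convention that I will fix once by checking with $\alpha=1$). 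The eigenvectors are, up to normalization,
\[
v_j(k)\propto\bigl[\,e^{ik},\ \sqrt2\lambda_j-e^{ik}\,\bigr]^{T}.
\]
From these I compute the weight $w_j(k)=|(v_j,\hPsi(0))|^2$ as an expression in $\cos k$ and $\sin k$ that is linear in the quadratic forms $|\alpha|^2-|\beta|^2$ and $\alpha\overline\beta+\overline\alpha\beta$. A possible term in $\mathrm{Im}(\alpha\overline\beta)$ should cancel when $j=1,2$ are combined and the $k\mapsto\pi-k$ symmetry is used.

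\emph{Change of variables.} For fixed $j$, as $k$ ranges over $\KK$, $y=h_j(k)$ covers $(-1/\sqrt2,1/\sqrt2)$ twice; the two preimages are $k$ and $-k$, since $h_j$ depends only on $\cos k$. Solving $y^2(1+\cos^2k)=\cos^2k$ gives $\cos^2k=y^2/(1-y^2)$, which yields
\[
\sin^2k=\frac{1-2y^2}{1-y^2},\qquad |dy/dk|=\frac{|\sin k|}{(1+\cos^2k)^{3/2}}=(1-y^2)\sqrt{1-2y^2}.
\]
Then the density of $Y$ is
\[
f(y)=\sum_j\sum_{k:\,h_j(k)=y}\frac{w_j(k)}{2\pi\,|h_j'(k)|}.
\]
The unperturbed part contributes the weight $1/2$ at each of the four preimages, totaling $\frac{1}{\pi(1-y^2)\sqrt{1-2y^2}}$. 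This is $f_K(y;1/\sqrt2)$, matching Theorem \ref{th:Konno} with $|a|^2=1/2$.

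\emph{Main obstacle.} The delicate step is the bookkeeping of the $\alpha,\beta$-dependent part of $w_j$: one must show that after summing over the four preimages it equals exactly $-(|\alpha|^2-|\beta|^2+\alpha\overline\beta+\overline\alpha\beta)\,y$ times the base density. The odd-in-$\sin k$ terms cancel between $k$ and $-k$, and the even terms should reduce to multiples of $\cos k/\sqrt{1+\cos^2k}=\mp y$. I would first do this computation for the special initial states $(1,0)$, $(0,1)$ and $(1,1)/\sqrt2$. These determine the coefficients of the two quadratic forms and fix the sign convention. Linearity of $w_j$ in the Hermitian form $\hPsi(0)\hPsi(0)^*$ then extends the result to general $(\alpha,\beta)$, given that the $\mathrm{Im}(\alpha\overline\beta)$ contribution vanishes as noted above.
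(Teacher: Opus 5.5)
Your proposal is correct and follows the paper's proof in Appendix B step for step: the same eigenvalues, eigenvectors and group velocities $h_{1}=-h_{2}=-\cos k/\sqrt{1+\cos^{2}k}$, the same push-forward of the GJS measure over the four preimages $\pm k(y)$, $\pm(\pi-k(y))$ with $dk/dy=1/\bigl((1-y^{2})\sqrt{1-2y^{2}}\bigr)$, and your ``$\tfrac12$ plus Bloch-linear'' bookkeeping does close. Explicitly, with $I=1+\cos^{2}k$ and $\gamma=|\alpha|^2-|\beta|^2+\alpha\overline{\beta}+\overline{\alpha}\beta$ one gets $w_1=\tfrac12+\tfrac{\gamma\cos k}{2\sqrt I}-\tfrac{\sin k}{\sqrt I}\,\mathrm{Im}(\overline{\alpha}\beta)$ and $w_2=1-w_1$, so the sine term cancels between $\pm k$ and each $j$ contributes $1-\gamma y$. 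The one thing to drop is the hedge about an ``overall sign convention'': the sign of $h_j$ is fixed by $D=i\,d/dk$ together with the Fourier convention, your $h_1$ already has the correct sign, and fixing it by comparison with the target formula would be circular.
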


\begin{proof}
See Appendix B.
\end{proof}

\section{Particle mechanics}

In this section, we will demonstrate that the QW model, the QW as a quantum mechanical toy model, naturally contains the correspondence between classical mechanics and {\lq\lq}quantum-like behavior{\rq\rq}.

\begin{figure}
\begin{center}

\includegraphics[width=0.7\hsize]{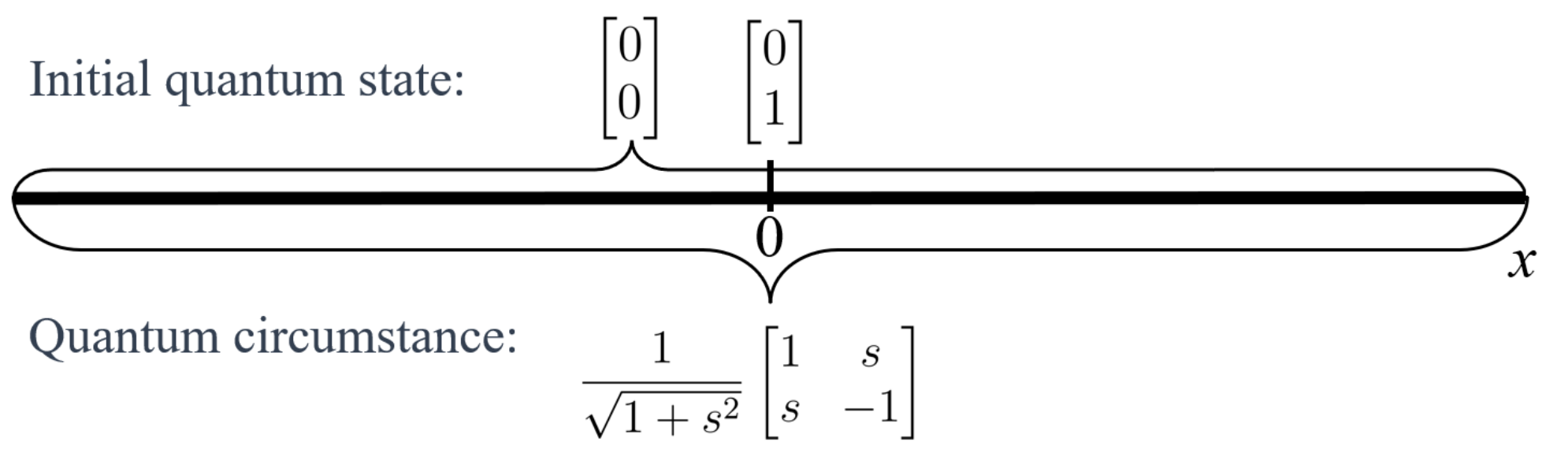}
\

\smallskip

\end{center}

\caption{
Constant quantum circumstance with quasi-inner scattering ratio $s$.
}\label{fg:system_sKdist}
\end{figure}

For this purpose, we consider the constant quantum circumstance $U_x=U$ for all $x\in\ZZ$, with the quantum coin
\[
U=\frac{1}{\sqrt{1+s^{2}}}\left[\begin{matrix} 1 & s \\ s & -1\end{matrix}\right]\in\rU(2),\qquad s\ge 0 .
\]

Depending on $s$, the behavior of wave profile of the QW exhibits the following characteristics:

Since for a certain initial condition, we find that the probability density has the characteristic profiles, as a special case of Konno distribution in Theorem  \ref{th:Konno}, we set the initial condition,
\begin{equation}
\Psi(0,x) =  \varphi_r\delta_{x,0}, \quad
\varphi_r:=\displaystyle{\left[\begin{matrix} 0 \\ 1 \end{matrix}\right]}.
\label{eq:ICparticle}
\end{equation}
Then, we observe the wave profile that has a peak in the right hand side $x>0$.
We refer to the peak as {\lq\lq}wave front{\rq\rq} as displayed in Figure \ref{fg:NEr} obtained by numerical evaluation.
Figure \ref{fg:NEr} (a) shows the probability density function $\rho(t,x)$ for $s=0.1$, and (b) shows the profile for $s=2$.
Both provide a peak, obtained by a special case of the Konno distribution, which is obtained for $s>0$ by setting $\alpha=0$, $\beta=1$ in Theorem \ref{th:Konno},
\begin{equation}
f_K(x) = \frac{s}
  {\pi (1-x^2)\sqrt{1-(1+s^2)x ^2}}
 I_{(-1/\sqrt{1+s^2},\,1/\sqrt{1+s^2})}(x) \times
\left( 1+x \right).
\label{eq:1.1}
\end{equation}
The shape of (\ref{eq:1.1}) is illustrated in Figure \ref{fg:Dep_s} (a).
We refer to the point of divergence on the right-hand side as the wave front corresponding to Figure \ref{fg:NEr}.
The speed of the peak or the wave front in the case is also determined as the result of  Theorem \ref{th:Konno}.

\bigskip

We will interpret these results as follows.

\begin{enumerate}

\item 
The wave front can be defined well by the Konno distribution for the initial state (\ref{eq:ICparticle}), as shown in the numerical computation in Figure \ref{fg:NEr}.
Although the Konno distribution diverges at $x = \pm |a|=\pm 1/\sqrt{1+s^2}$ in Theorem \ref{th:Konno}, its contribution to the integral is finite.
Further due to the property of the Konno distribution, the wave front has the  constant velocity $c=|a|$.

\item For $s=0$, the matrices in (\ref{eq:P,Q}) reduce to
$P_x=\left[\begin{matrix} 1 & 0 \\ 0 & 0\end{matrix}\right]$ and
$Q_x=\left[\begin{matrix} 0 & 0 \\ 0 & -1\end{matrix}\right]$,
so that (\ref{eq:PQPsi}) decouples into
\begin{equation}
\Psi_{L}(t+1,x)=\Psi_{L}(t,x+1), \qquad \Psi_{R}(t+1,x)=-\Psi_{R}(t,x-1). 
\label{eq:PQPsis=00}
\end{equation}

For the initial state (\ref{eq:ICparticle}) we have $\Psi_L = 0$ and
$\Psi_R(t,x)=(-1)^t\delta_{x,t}$: the state is transported to the right
with speed $1$ without spreading.
Writing $\Psi_{R}(t,x)=(-1)^t f(t,x)$, the continuum limit of the second
equation in (\ref{eq:PQPsis=00}) is the advection equation,
$$
  \left[\frac{\partial }{\partial t}+ \frac{\partial}{\partial x}\right] f=0.
$$
In other words, the $s=0$ case describes a trivially moving particle on the lattice $\ZZ^2$; the speed $c$ of the particle is $c=1$ [space step]/[time step].

\item Even for $s\neq 0$ but smaller $s > 0$, it also exhibits a particle-like picture as in Figure \ref{fg:NEr} (a).
Even for $s>0$ case, we also find the particle-like property of the wavefront of the QW:
Its velocity remains constant over time, moving straightforwardly to the right depending upon the initial state:
If we put $\varphi_r:=\displaystyle{\left[\begin{matrix} 1 \\ 0 \end{matrix}\right]}$, we locate the wave-front moving to the left.

For $s \to +0$, it shows the deterministic classical particle with velocity $c=1$[space step]$/$[time step], and the probability density shrinks to the $\delta$-like distribution, which is the same as the random walk limit of the deterministic limit.

However it is very surprising that the development of the function (physically, scalar field or the wave function) represents the trajectory of the particle well even for $s>0$.

\item For $s>0$, the speed of the wave front $1/\sqrt{1+s^2}$ is less than $1$ as in Figure \ref{fg:Dep_s} (b);
This brings to mind the relationship between group velocity and phase velocity, which is associated with the de Broglie-Bohm pilot wave theory \cite{Bo,Bu}, although, the QW model does not provide a deterministic picture or hidden variables.

\item For larger $s$, it behaves like wavy profile.
When $s$ increases, the peak widens and the wavefront propagates more slowly, as illustrated in Figure \ref{fg:Dep_s} (b).

\item At a position $x \in \ZZ$ around the wave front such that $\bigl||a|t - x\bigr| \in O(t^{1/3})$, the finding probability is estimated by $\rho(t,x) \in O(t^{-2/3})$ \cite{ST, MSSS}, while at a position $x \in \ZZ$ such that $|x| < |a|t$, it is estimated by $\rho(t,x) \in O(t^{-1})$, where $a=1/\sqrt{1+s^{2}}$ is the $(1,1)$ entry of $U$ in the notation of Theorem \ref{th:Konno}.
Here, for functions $h$ and $g$ of $t$ such that $g(t)>0$, we write $h\in O(g(t))$ $(t\to\infty)$ if there are constants $C,T>0$, independent of $t$, such that $|h(t)|\le C\,g(t)$ for $t\ge T$, i.e., $h(t)/g(t)$ is bounded as $t\to\infty$.

\end{enumerate}

In other words, we find that the QW model has the parameter $s$ that bridges the states between the particle-like and classical particle limits and the wavy picture.

We will proceed to investigate the properties of the QW model as follows.
In Section \ref{sec:3.1}, we demonstrate that the model exhibits particle-like properties even in the case of reflection and partial reflection. In Section \ref{sec:4}, we show that the model exhibits the interference phenomenon.

\bigskip

\bigskip
\begin{figure}
\begin{center}

\includegraphics[width=0.4\hsize]{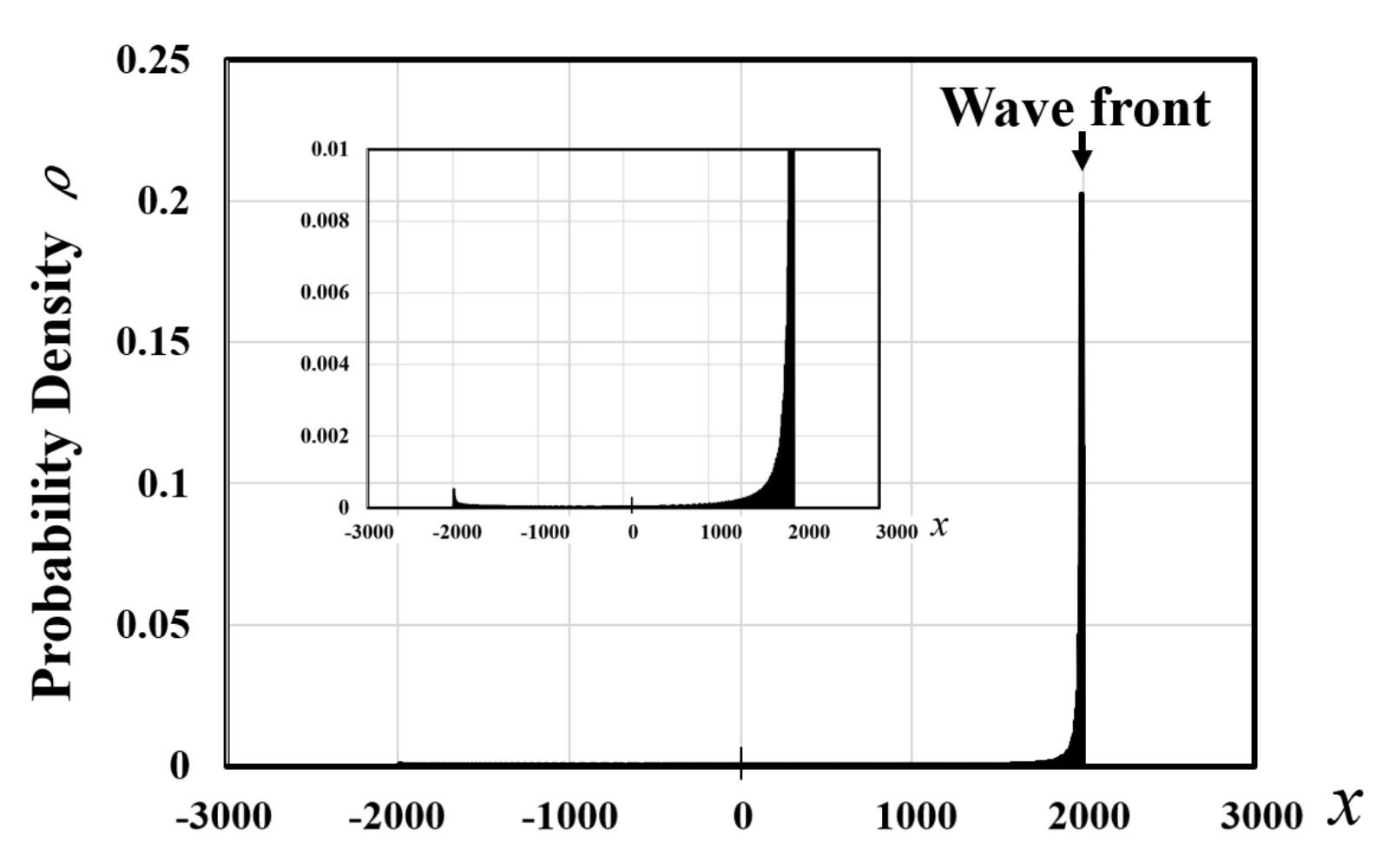}
\hskip 0.05\hsize
\includegraphics[width=0.4\hsize]{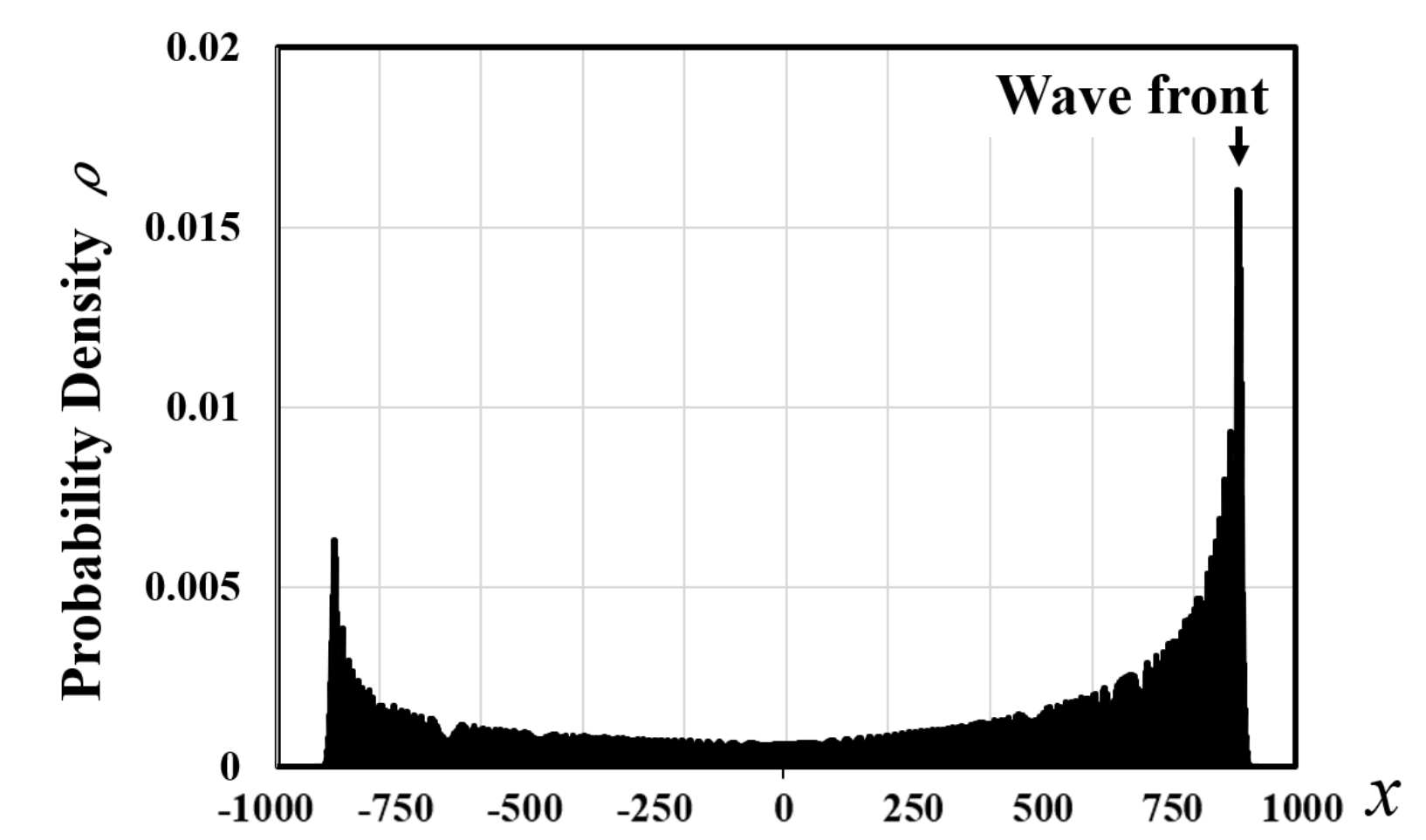}

(a)
\hskip 0.45\hsize
(b)
\hskip 0.1\hsize\

\smallskip
\smallskip
\smallskip

\end{center}

\caption{
Dependence on $s$:
the numerical calculation shows (a) for $s=0.1$ and (b) for $s=2$.
}\label{fg:NEr}
\end{figure}

\bigskip
\begin{figure}
\begin{center}

\includegraphics[width=0.4\hsize]{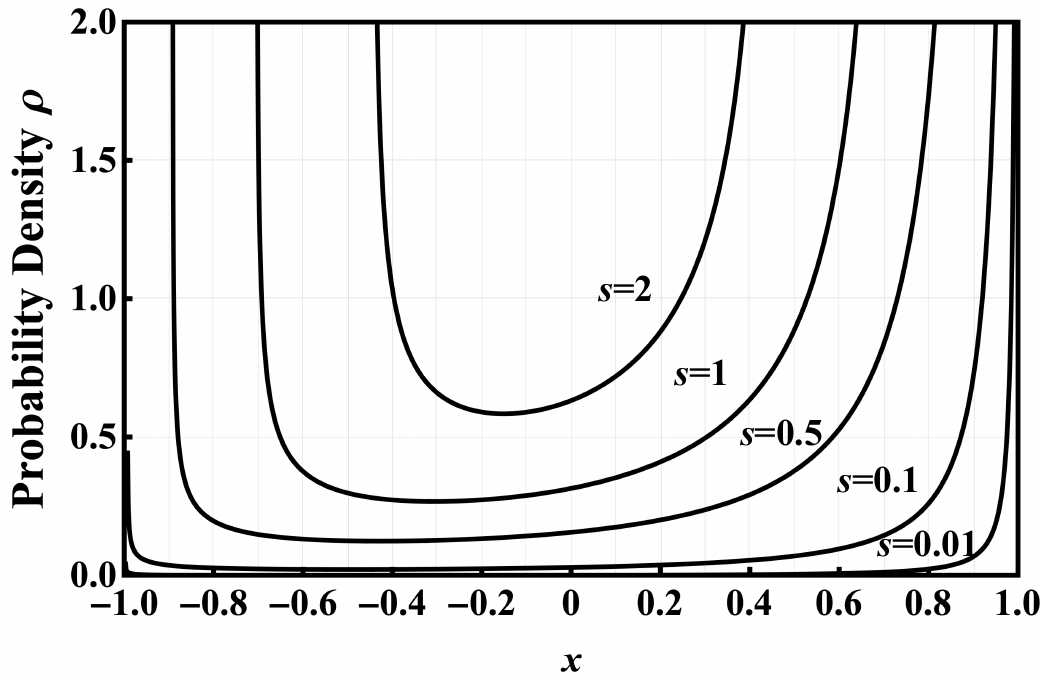}
\hskip 0.05\hsize
\includegraphics[width=0.4\hsize]{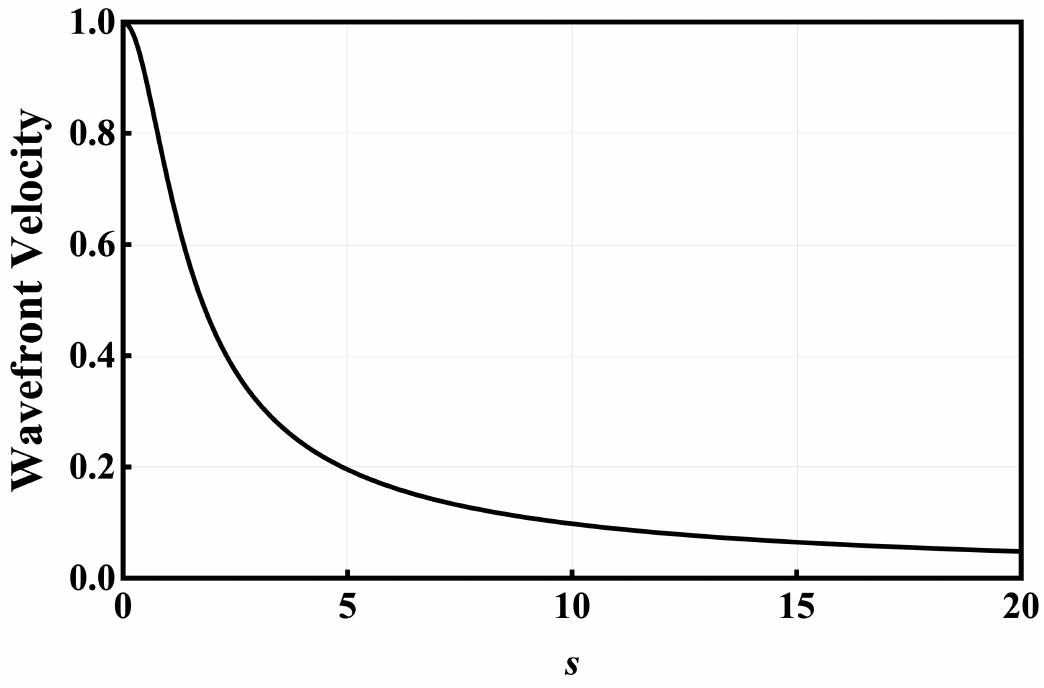}

(a)
\hskip 0.45\hsize
(b)
\hskip 0.1\hsize\

\smallskip
\smallskip
\smallskip

\end{center}

\caption{
Dependence on $s$ in (\ref{eq:1.1}) (a) and 
the dependence of the velocity of the wavefront on $s$ (b)
}\label{fg:Dep_s}
\end{figure}

Before presenting them, we will provide some comments on the QW model and the optics, as mentioned in part in \cite{IKMM}.

\begin{remark}\label{rmk:Zigzag}
{\rm{
As we give a short review of Feynman's checkerboard Dirac model in Appendix A, 
we give some comments on the relation between Feynman's model and the QW model.

For the checkerboard (zigzag) diagram, if we choose two endpoints, P and Q, then the length of each path is equal.
Feynman used this fact and added the phase $i \varepsilon$ at each turn in his model;
 the development is not obtained by the unitary matrix contrast to the QW model.

Even in Feynman's model of the Dirac particle, we can approximate the Euclidean distance in the zig-zag system by assigning the phase $i\varepsilon$ at every turn, from right to left or from left to right, as in Appendix A.

We note that in his model, unlike our QW model, unitarity is not guaranteed in the time evolution.
Therefore, his model does not yield desirable properties, such as the nature of a probability distribution as a time-dependent process.
It is obvious that, unfortunately, results like the QW cannot be obtained using his model.

However, when we discretize the wave system, we can recall the theories of scattering (S-)matrix and transfer (T-)matrix in optics \cite{Crook} and the one-dimensional quantum mechanics \cite{Schiff}.
There, we discriminate the right-direction state and the left-direction state at each point.
Furthermore, based on the theories of S and T matrices in optics, we assign a two-dimensional vector space to each point.
Crook proved that the sum of the optical lengths of the paths with reflection and transmission rate in these system recovers the S-matrices derived from the one-dimensional Helmholtz (or Schr\"odinger) equation, as the path integral method recovers the Schr\"odinger equation \cite{Schul}.
Crook's picture of the path sum method is essentially the same as the primitive model of Feynman's path integral method \cite{Crook}.

Furthermore, in the S-matrix theory, we observe a correspondence between the incoming and outgoing rays.
We regard this as causality because the outgoing rays are caused by the incoming rays.
Thus, we can interpret this as a time-based process whereby incoming rays generate outgoing rays at a later moment.
Then, we can recreate the QW model by assuming that the particle is scattered at every point and that the S-matrix is unitary as partially mentioned in \cite{IKMM}.

Moreover we give a comment on the particle-like nature and the path integral method.
We demonstrate Fermat's principle in optics using the QW model, in which the wavefront propagates linearly in space-time.
This picture was predicted by Feynman \cite{F}.
Feynman stated that, by considering the sum of all paths of a photon from points $P$ and $Q$ with weight $e^{2\pi\ii L_c/\lambda}$, Fermat's principle can be recovered as a thought experiment, where $L_c$ is the optical length along the path and $\lambda$ is the wavelength of the photon.

In the checkerboard-diagram, the path-length from time $t$ to $t + \Delta t$ is determined as a value proportional to $\Delta t$ as in Appendix A.
Thus the optical length cannot be expressed well in the checkerboard-diagram.
However, the number of combinations represents the distance between $(x_0, t)$ and $(x_1, t + \Delta t)$.
In fact, Konno discovered the Konno distribution by counting the number of combinatorial paths and taking into account the non-commutativity of $P_x$ and $Q_x$.
It can be considered the mathematical realization of Feynman's thought experiment for the Fermat principle in \cite{F}.

We note that Gudder also studied a similar model of QW for a similar motivation to understand quantum mechanics \cite{Gud}, and Strauch studied a relation between QW models inspired by Feynman's checkerboard and the Dirac equation \cite{Str}.

}}
\end{remark}

\begin{remark}\label{rmk:SunadaTate}
{\rm{
The asymptotic behavior of the probability density function of QW, in accordance with Theorem \ref{th:Konno} weakly, describes particle-like behavior as mentioned above. However, if we zoom in on the distribution of the function around the peak of the probability density function, as shown in Figure \ref{fg:ST_dist}, we can see that it also exhibits wave-like properties.

Sunada and Tate also showed more precise profile of the QW as in \cite{ST}.
They described the profile of the outer side of the peak illustrated in Figure \ref{fg:ST_dist} by using the Airy functions after investigation of the perturbation of the Konno distribution Theorem \ref{th:Konno}.

This reminds us of the refinement of the caustic phenomenon when considering the wave effect, as discussed in Section \ref{sec:5} (see Figure \ref{fg:BU}).
}}
\end{remark}

\bigskip
\begin{figure}
\begin{center}

\includegraphics[width=0.6\hsize]{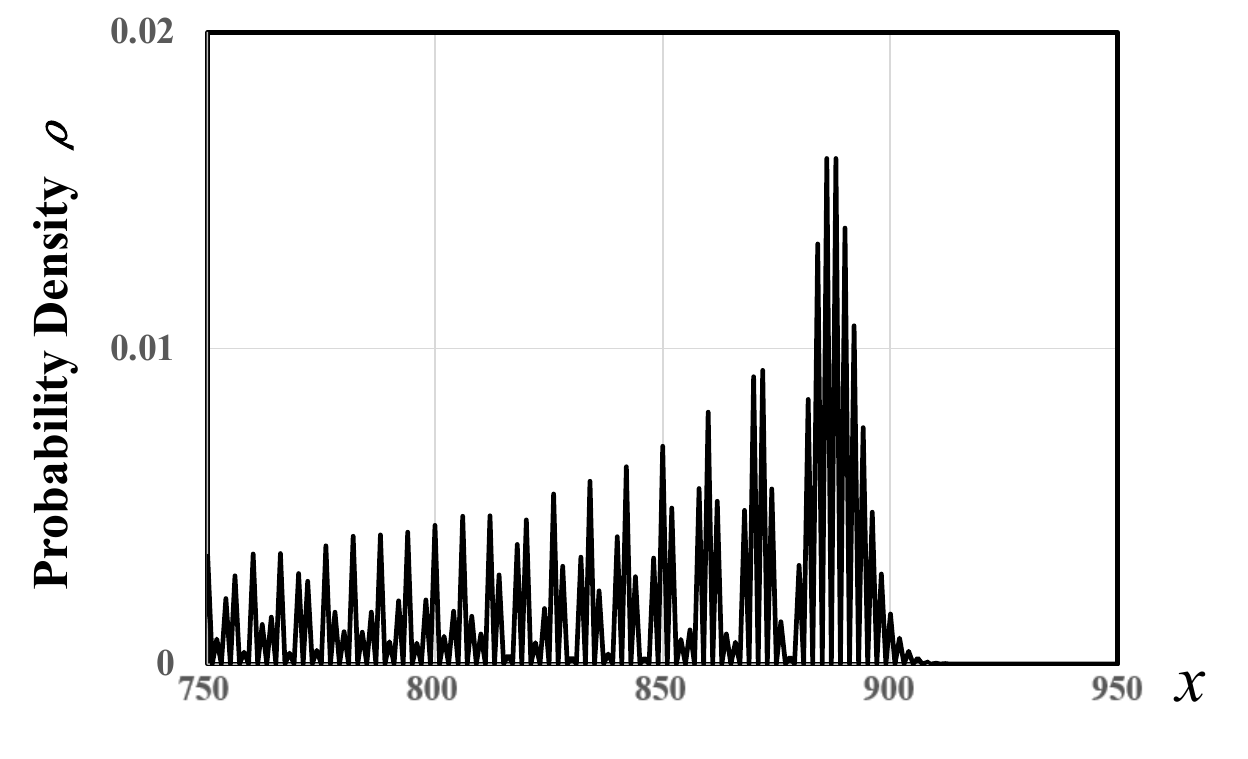}
\

\end{center}

\caption{
The distribution $\rho(t,x)$ of the Hadamard walk around its peak.
}\label{fg:ST_dist}
\end{figure}

\bigskip

\subsection{Numerical observation of its partial-reflection phenomena}
\label{sec:3.1}

\begin{figure}
\begin{center}

\includegraphics[width=0.7\hsize]{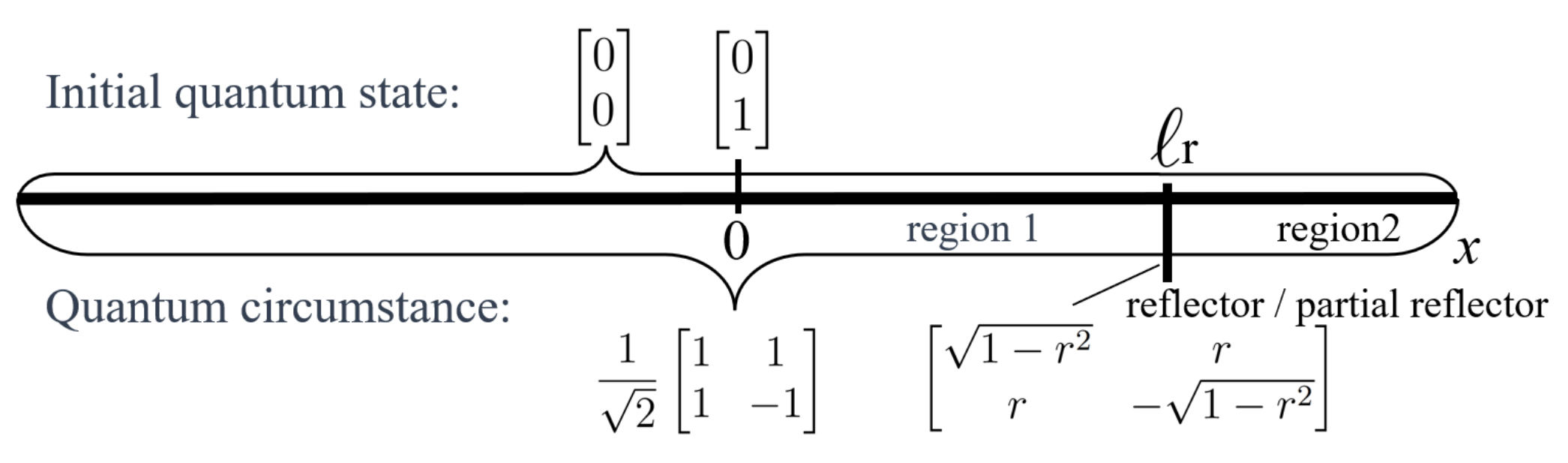}
\

\smallskip

\end{center}

\caption{
Constant quantum circumstance with a half-reflection plate
}\label{fg:system_NSimDist}
\end{figure}

In this subsection, we demonstrate that the QW model also represents the reflection and partial-reflection of particle.
Figure \ref{fg:reflect} shows that the wavefront is reflected as a ballistic particle and partially reflected as a wavy wavefront.

In this subsection, we consider the reflection and partial-reflection problem of the QW model.
As shown in Figure \ref{fg:system_NSimDist}, we set the quantum circumstance $\cU$ with $U_x$
\begin{equation}
     U_x = \left\{\begin{matrix} 
    \left[\begin{matrix} \sqrt{1-r^2} & r \\ r & -\sqrt{1-r^2}\end{matrix}\right] & x=\ell_\mathrm{r}>0,\\
\frac{1}{\sqrt2}
    \left[\begin{matrix} 1 & 1 \\ 1 & -1\end{matrix}\right] & \mbox{otherwise,}
\end{matrix}\right.
\end{equation}
where $r$ is the reflection rate.
The $r=1$ case means the reflection at the point $\ell_\mathrm{r}$ and $0<r<1$ case shows the partial reflection.

\bigskip
\begin{figure}
\begin{center}

\includegraphics[width=0.37\hsize]{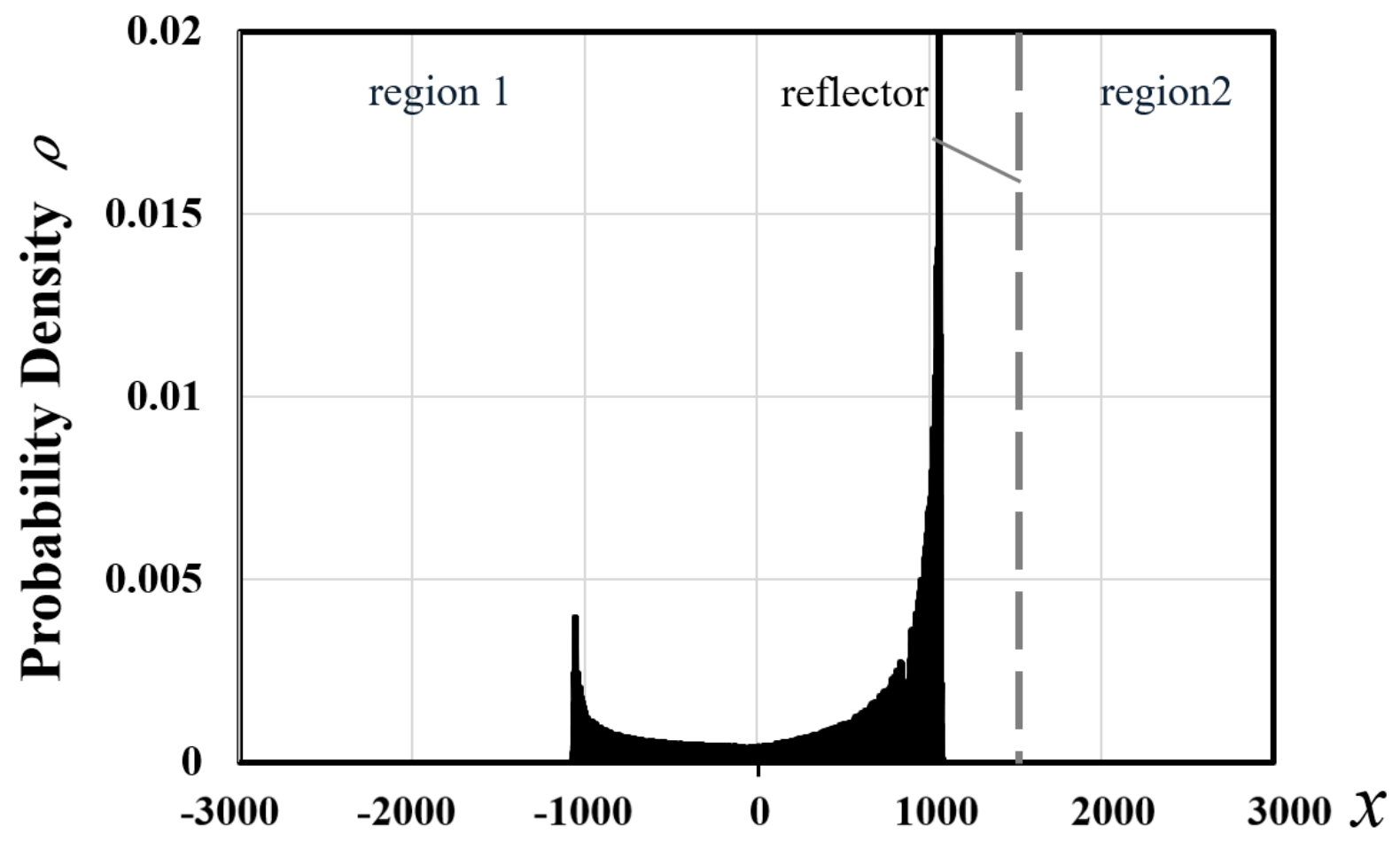}
\hskip 0.01\hsize
\includegraphics[width=0.37\hsize]{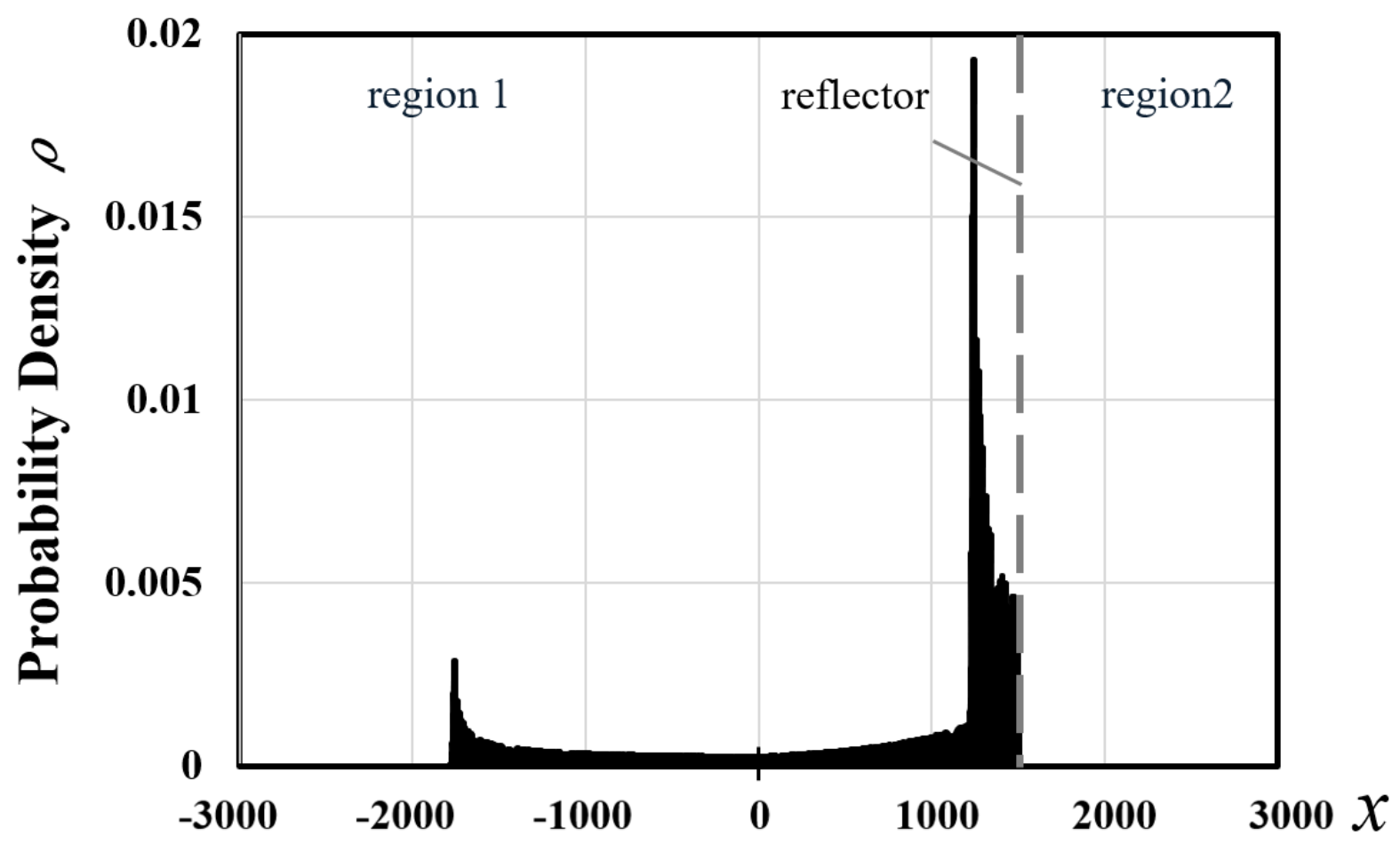}

(a)
\hskip 0.41\hsize
(b)

\includegraphics[width=0.37\hsize]{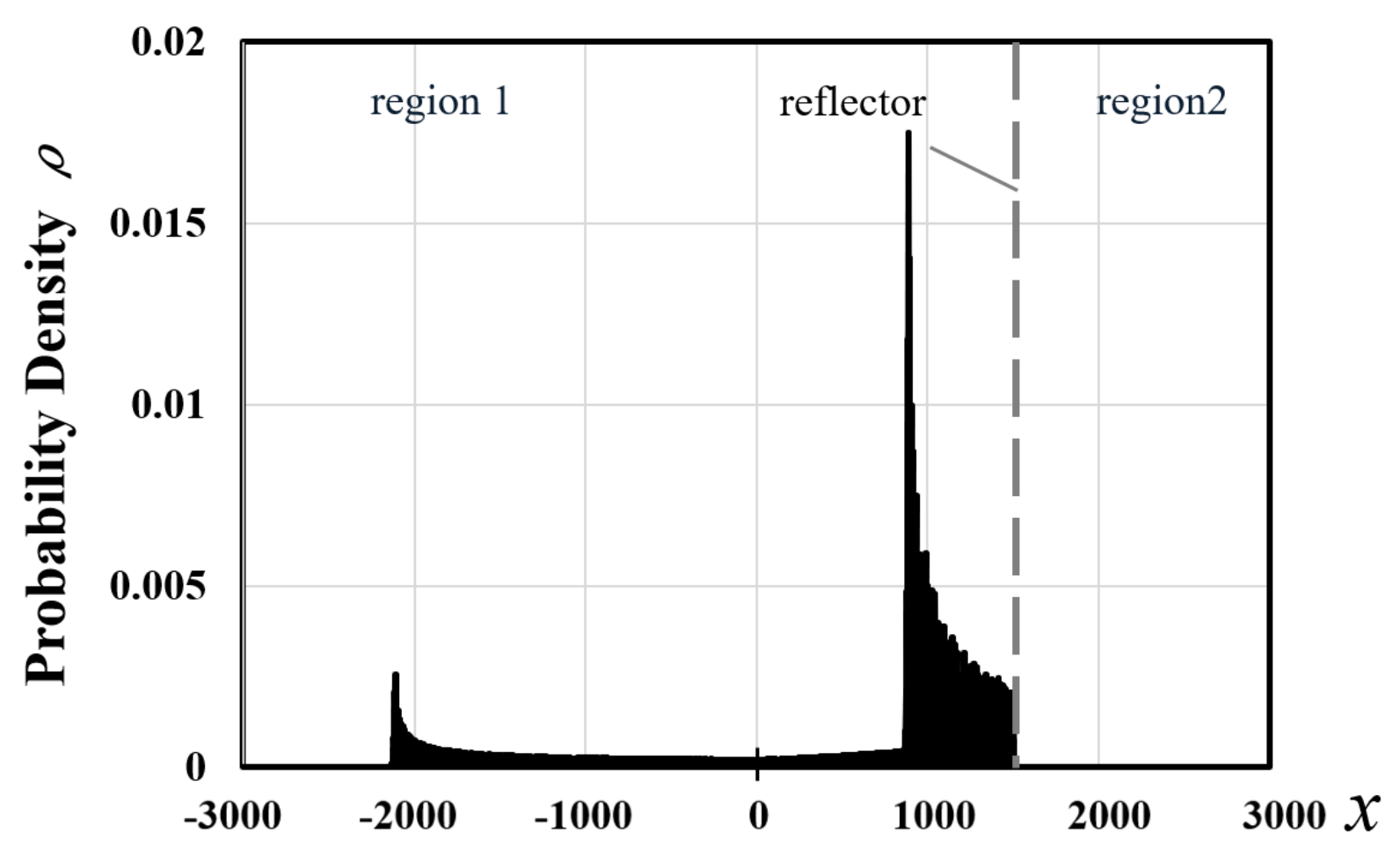}
\hskip 0.01\hsize
\includegraphics[width=0.37\hsize]{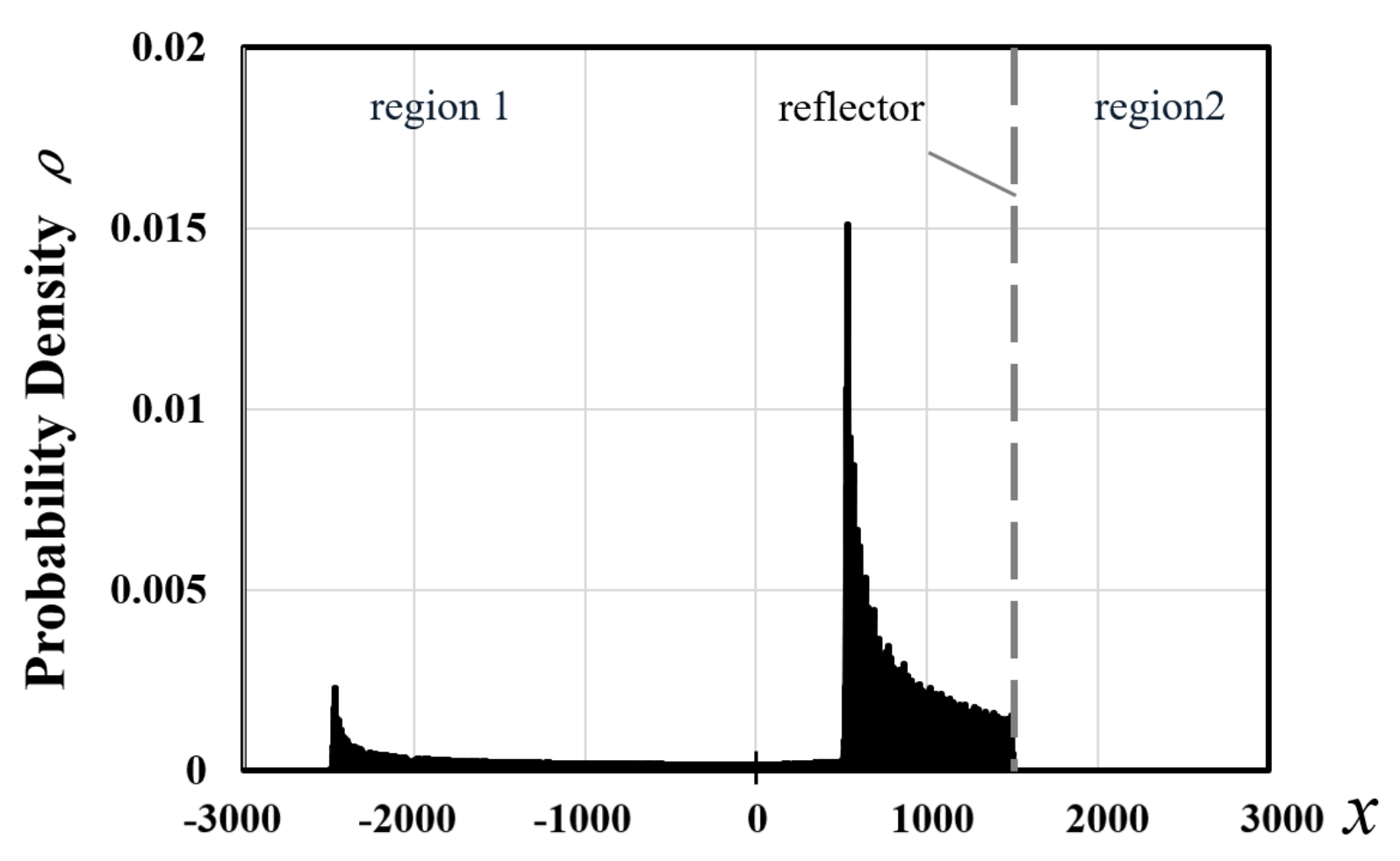}

(c)
\hskip 0.41\hsize
(d)

\smallskip
\smallskip

\includegraphics[width=0.37\hsize]{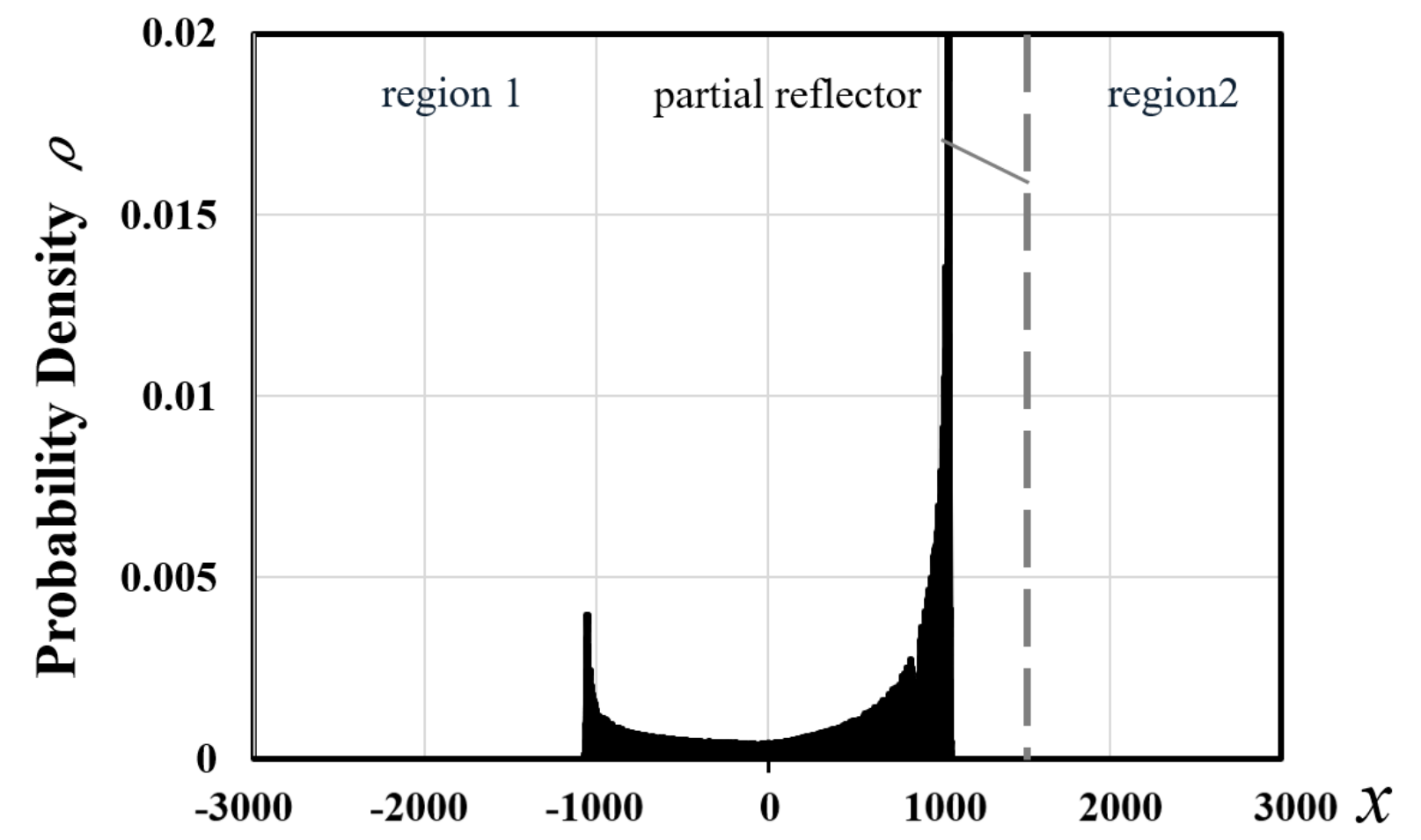}
\hskip 0.01\hsize
\includegraphics[width=0.37\hsize]{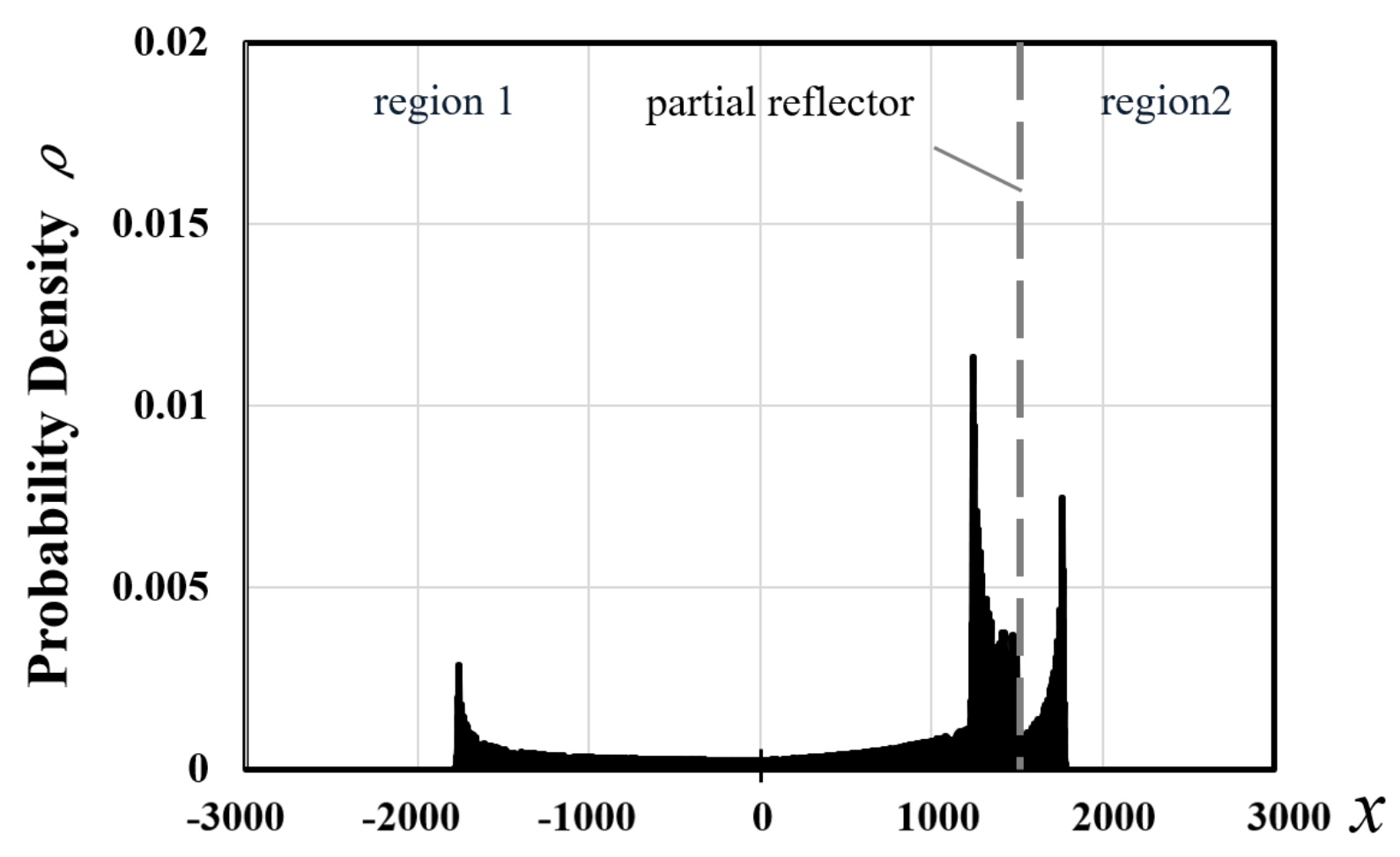}

(e)
\hskip 0.41\hsize
(f)

\includegraphics[width=0.37\hsize]{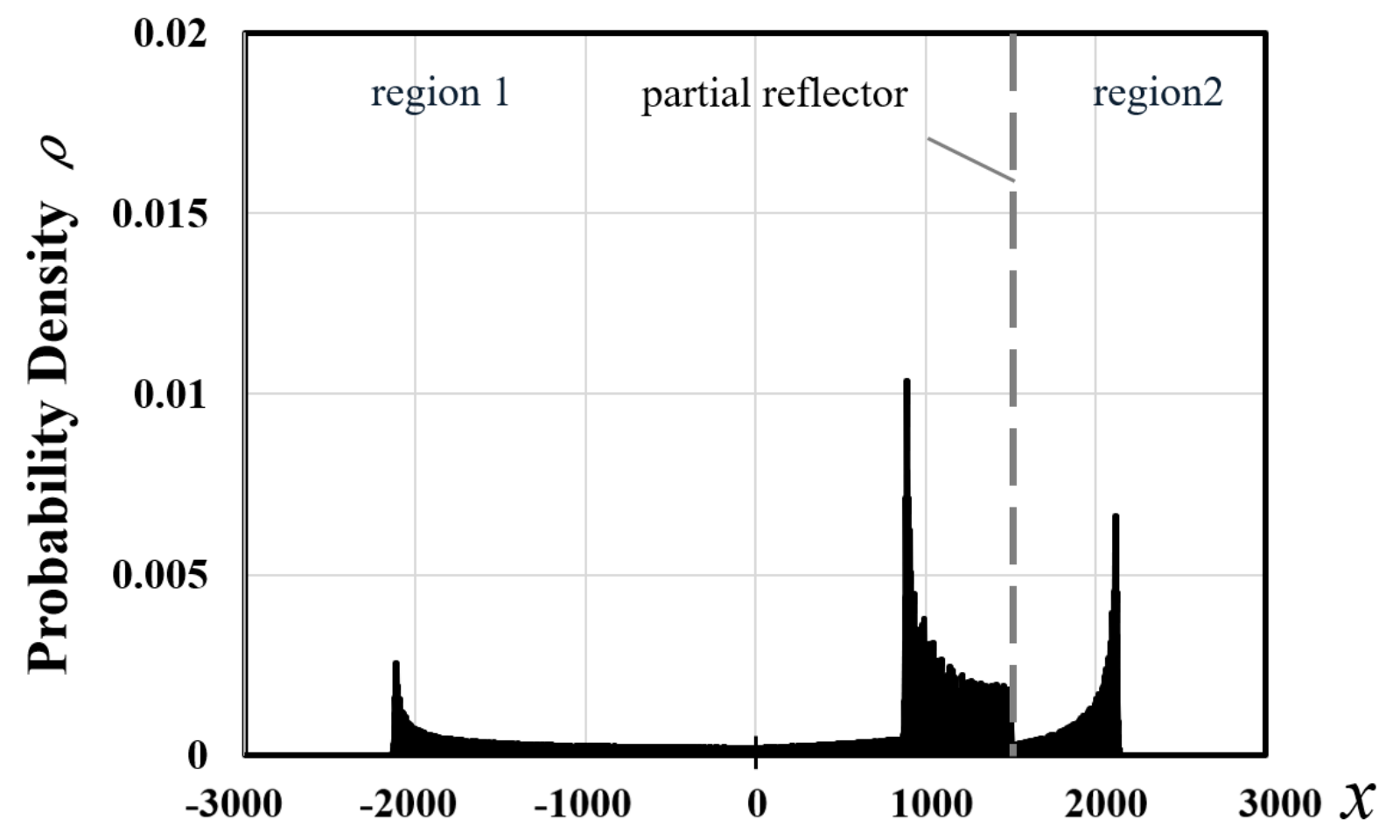}
\hskip 0.01\hsize
\includegraphics[width=0.37\hsize]{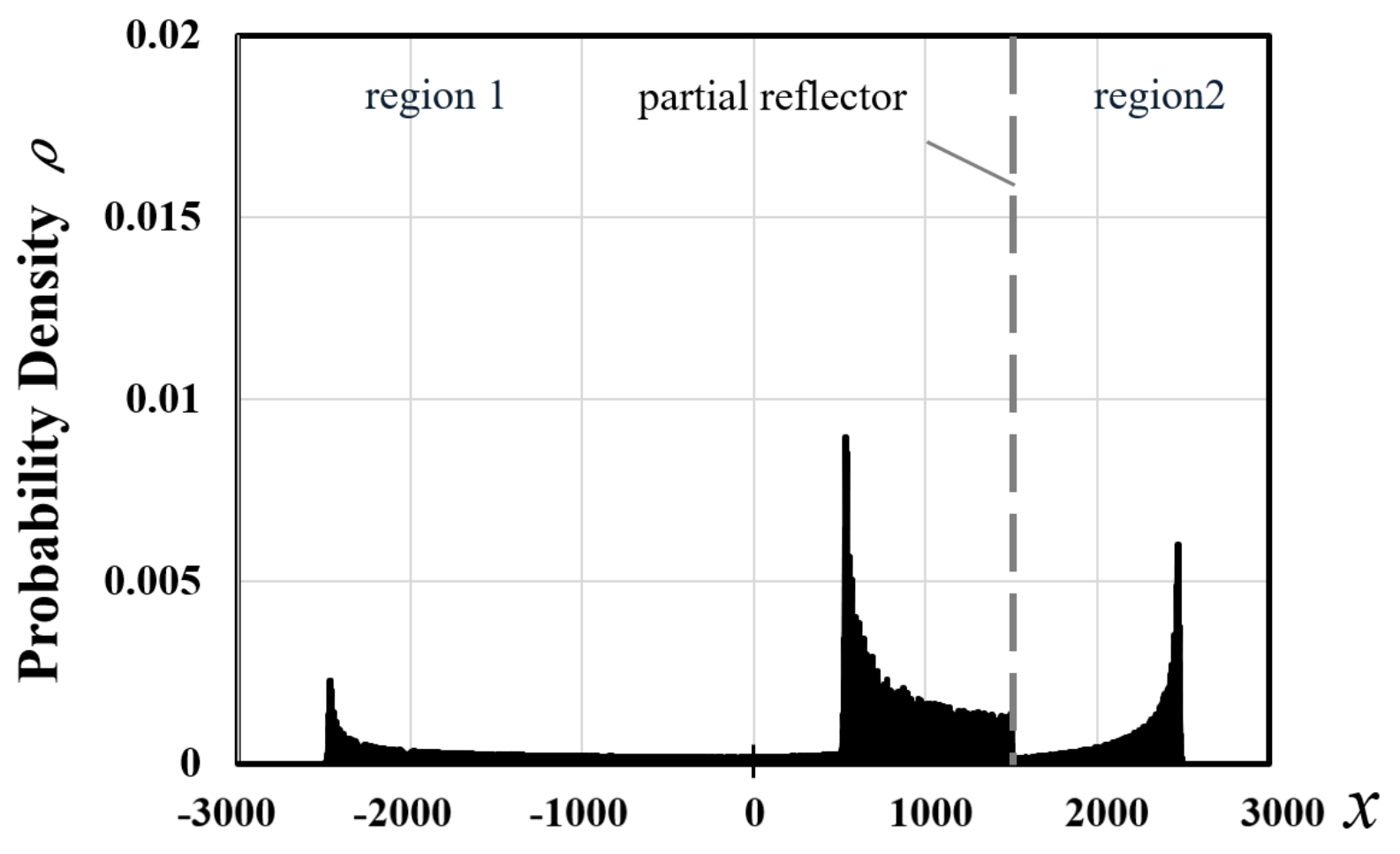}

(g)
\hskip 0.41\hsize
(h)

\smallskip

\end{center}

\caption{
The time evolution of the reflector and partial-reflector of the QW:
(a):$t=1500$, (b):$t=2500$, (c):$t=3000$, (d):$t=3500$ are for the total reflection case whereas
(e):$t=1500$, (f):$t=2500$, (g):$t=3000$, (h):$t=3500$ are for the partial reflection case $r=0.95$.
}\label{fg:reflect}
\end{figure}

As in Figure \ref{fg:reflect}, the QW model demonstrates the ballistic and particle-like behavior of the wavefront at $t=1500$, $t=2500$, $t=3000$  and $t=3500$. 
In the case of the reflector, the wavefront reflects completely. 
Figure \ref{fg:Trj} shows the trajectories of the maximum points in the regions, or the behavior of the wavefront position.
In the case of the partial-reflector $r=0.95$,  the wavefront reflects partially, while the other part passes through transparently.
The behavior in Region 2 is very similar to half of the original Konno distribution.
In fact, even in Region 2, the wavefront position moves like a classical particle as in Figure \ref{fg:Trj} as the dotted guide line shows even in  Figure \ref{fg:Trj} (b).
These are straight lines; the orbits show that the wavefront moves along the path that minimizes the distance between two points in spacetime.
In other words, it obeys Fermat's principle, as explained by Feynman in his book \cite{F}.
Furthermore, at the point of partial reflection, the wavefront of the transmitted wave is smoothly connected to the incoming one.
In other words, it behaves entirely like a particle.
It implies that there is no phase-shift effect after collision, as is encountered in nonlinear wave dynamics.

Moreover, in addition to the fact that these are obtained as the time evolution of a function (scalar field), note that even when the quantum circumstance --such as a reflector or a partial reflector-- undergoes significant non-perturbative changes, the waveform remains undisturbed and, piecewise, can be seen as having the shape of the Konno distribution or a superposition thereof.
In other words, we note that the Konno distribution (\ref{eq:1.1}) provides an extremely robust asymptotic state, which characterizes the quantum walk.

Such properties represent the wave dynamical properties, which are required even though the QW is a mathematical toy model of the path integral.

\bigskip
\begin{figure}
\begin{center}

\includegraphics[width=0.45\hsize]{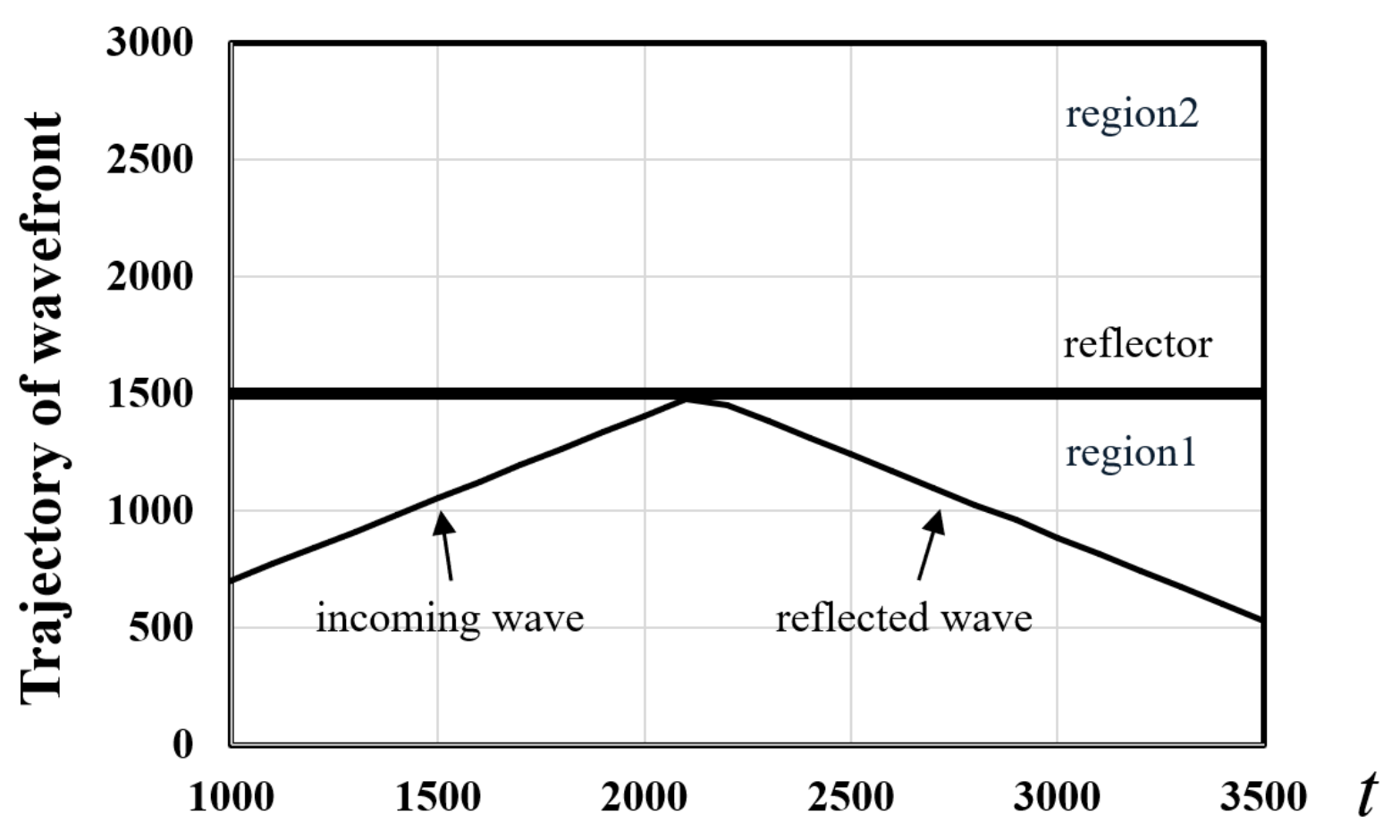}
\hskip 0.01\hsize
\includegraphics[width=0.45\hsize]{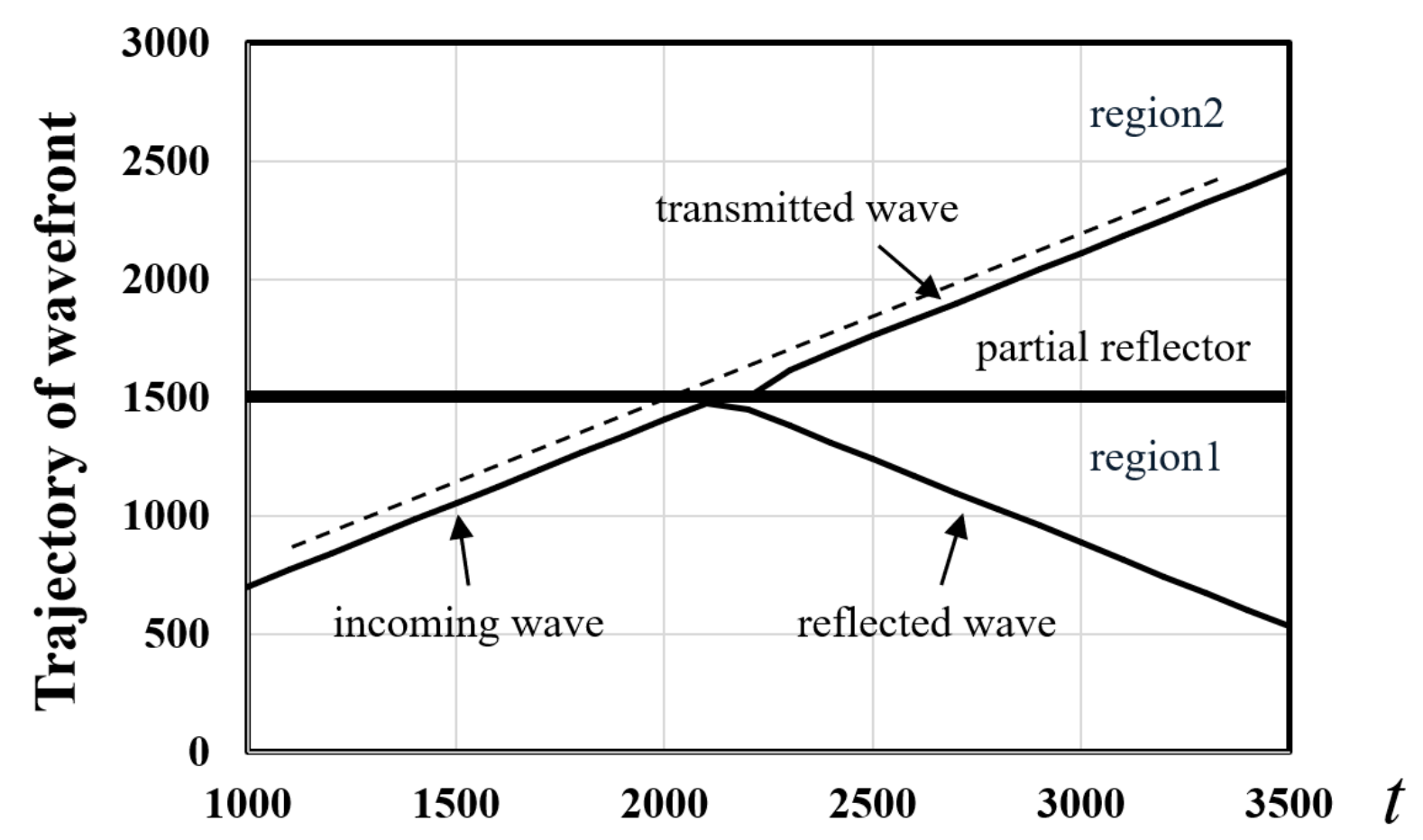}

(a)
\hskip 0.45\hsize
(b)

\smallskip

\end{center}

\caption{
The trajectory of the maximal points in the regions:
They show the behavior of the wavefront for the reflection case and partial reflection case $r=0.95$.
The dotted line is a guide line of trajectory.
}\label{fg:Trj}
\end{figure}

\section{Interference phenomenon}\label{sec:4}

As in the previous section, we demonstrated the particle nature of the QW.
In this section, we demonstrate that the QW model exhibits interference, or the wavy property.
To the best of our knowledge, no report on the interference phenomenon in the QW model has been published.
In other words, by the content of this section, we will show that the QW model exhibits both particle-like (or ballistic) properties, based on the Konno distribution, and wavy properties.
The interference phenomenon we present below has some very interesting properties that are similar to the far-field type diffraction known as the Fraunhofer type \cite{BW, HNSxx} including an X-ray diffraction pattern of a crystal \cite{BW}.
However, it also may be related to a quantization of a dipole via the Bohr quantization condition, and also the near-field type interference related to the number theory, such as the rational Talbot effect \cite{BW, MO} since it is characterized by a natural number $\ell \in \ZZ_{>0}:=\{\ell \in \ZZ\ |\ \ell >0\}$.

\begin{figure}
\begin{center}

\includegraphics[width=0.7\hsize]{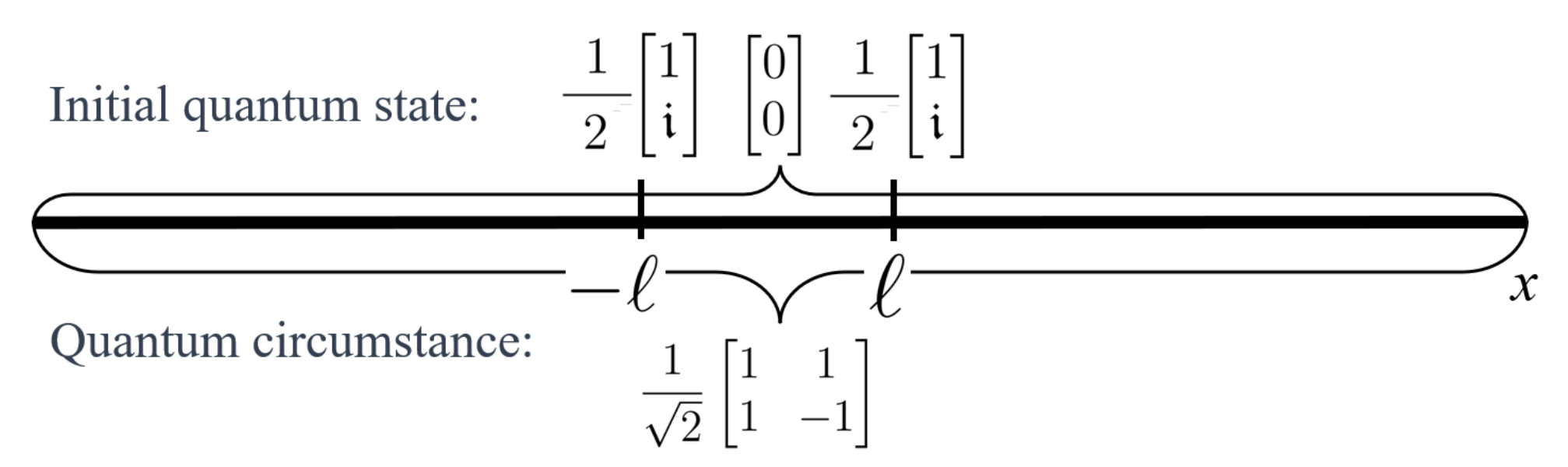}
\

\smallskip

\end{center}

\caption{
Double slit model 
}\label{fg:system_Wslit}
\end{figure}

Let us consider the double slit.
In other words, we consider the initial condition,
$$
\Psi(0,x)
=\frac{\varphi}{\sqrt{2}}\delta_{x,-\ell }+\frac{\varphi}{\sqrt{2}}\delta_{x,\ell},
$$
especially $\varphi=\varphi_c:=\displaystyle{\frac{1}{\sqrt{2}}\left[\begin{matrix} 1 \\ \frki \end{matrix}\right]}$ as in Figure \ref{fg:system_Wslit}.

Figure \ref{fg:Wslit} demonstrates the time evolution of the density distribution $\rho(t,x)$ for the $\ell=4$ case.
Then, we observe the interference as a double slit interference as in Figure \ref{fg:WslitAS}.
Figure \ref{fg:WslitAS} shows the profile of the density distribution $\rho(t,x)$ depending on the distance $2\ell$ between the initial source points. 
As in Figure \ref{fg:Wslit}, we observe that the profile of the density distribution $\rho(t,x)$ immediately transitioned to that shown in Figure \ref{fg:WslitAS}.
In other words, this phenomenon differs from the diffraction phenomenon of double-slit interference in quantum mechanics textbooks, e.g., \cite{FHS}.
The latter is essentially explained by the Fresnel type interference \cite{BW}, in which the profile changes depending on the distance from the source points.

The phenomena we observed are similar to the far-field one, known as the Fraunhofer diffraction.
In other words, the interference in Figure \ref{fg:Wslit} is immediately transmitted to the Fraunhofer-type diffraction as in Figure \ref{fg:WslitAS}.
Even though in the QW-model the wave number is explicitly defined, the distance $2 \ell$ is much smaller than the observed time $t$; both are just positive integers and we can compare them as $t \gg 2\ell$.

\bigskip

Thus, it is reasonable to evaluate the asymptotic behavior of the profile of the density distribution $\rho(t,x)$.
We give one of our main theorems in this paper, which is inspired by \cite{Kon14}.
\bigskip
\begin{figure}
\begin{center}

\includegraphics[width=0.45\hsize]{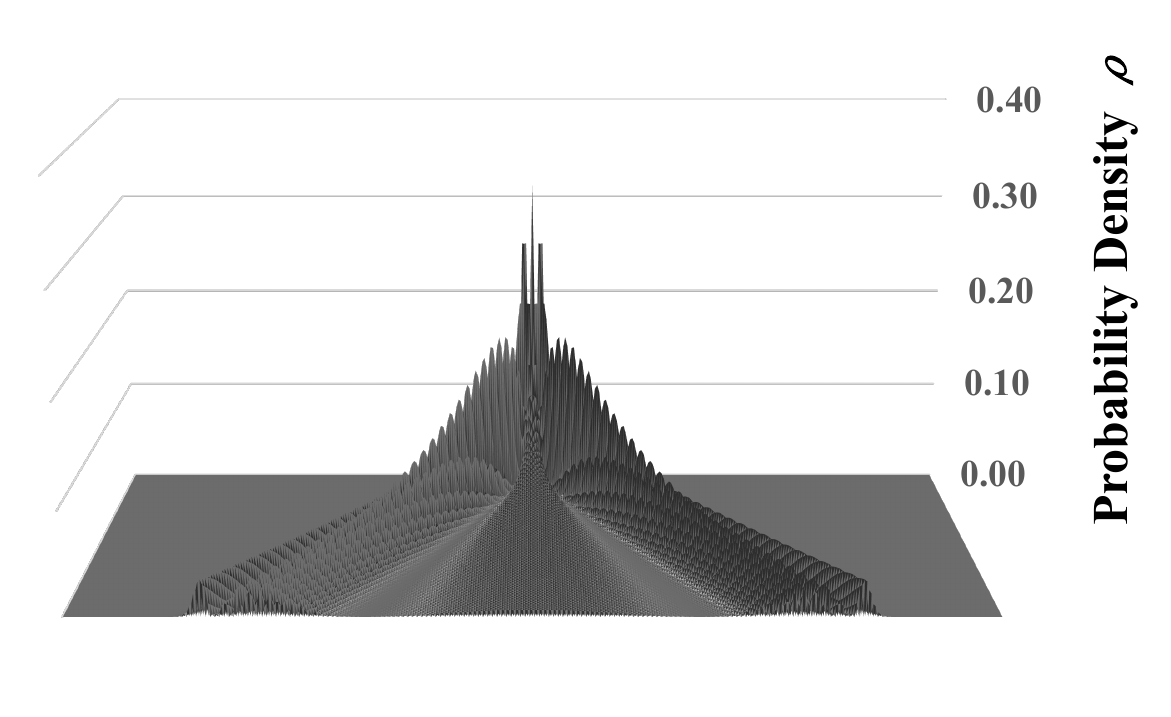}
\hskip 0.05\hsize
\includegraphics[width=0.45\hsize]{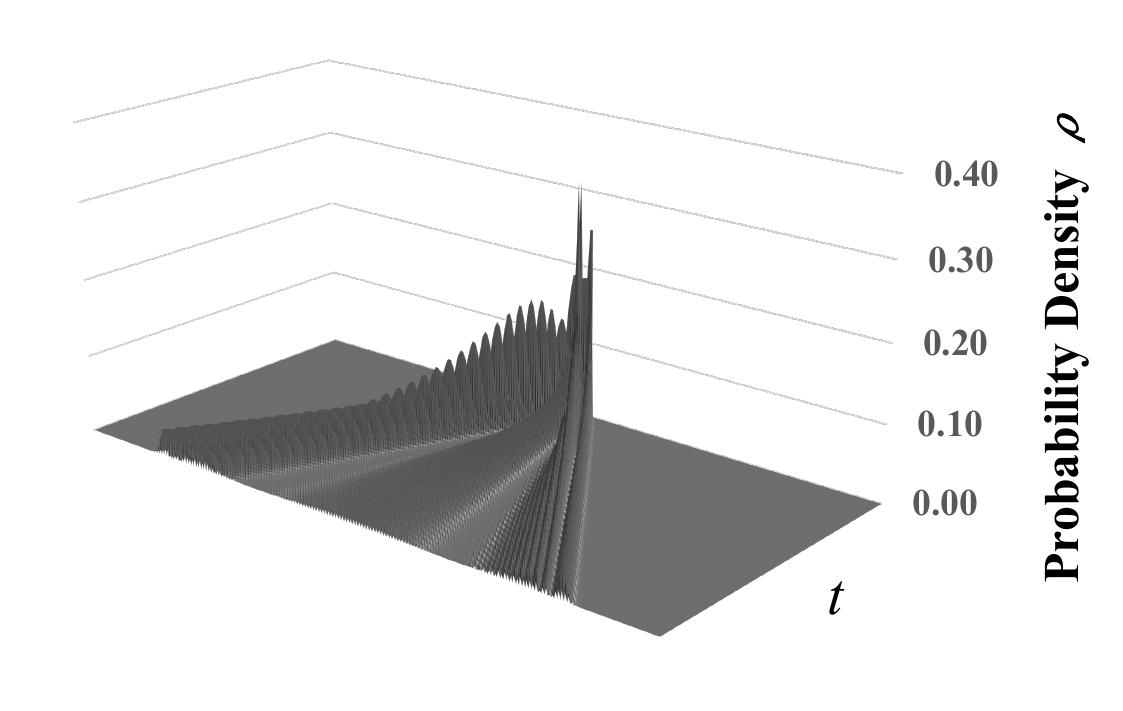}

(a)
\hskip 0.45\hsize
(b)

\smallskip

\end{center}

\caption{
Time evolution of the double slit model in the QW. $\ell=4$
}\label{fg:Wslit}
\end{figure}

\bigskip
\begin{figure}
\begin{center}

\includegraphics[width=0.29\hsize]{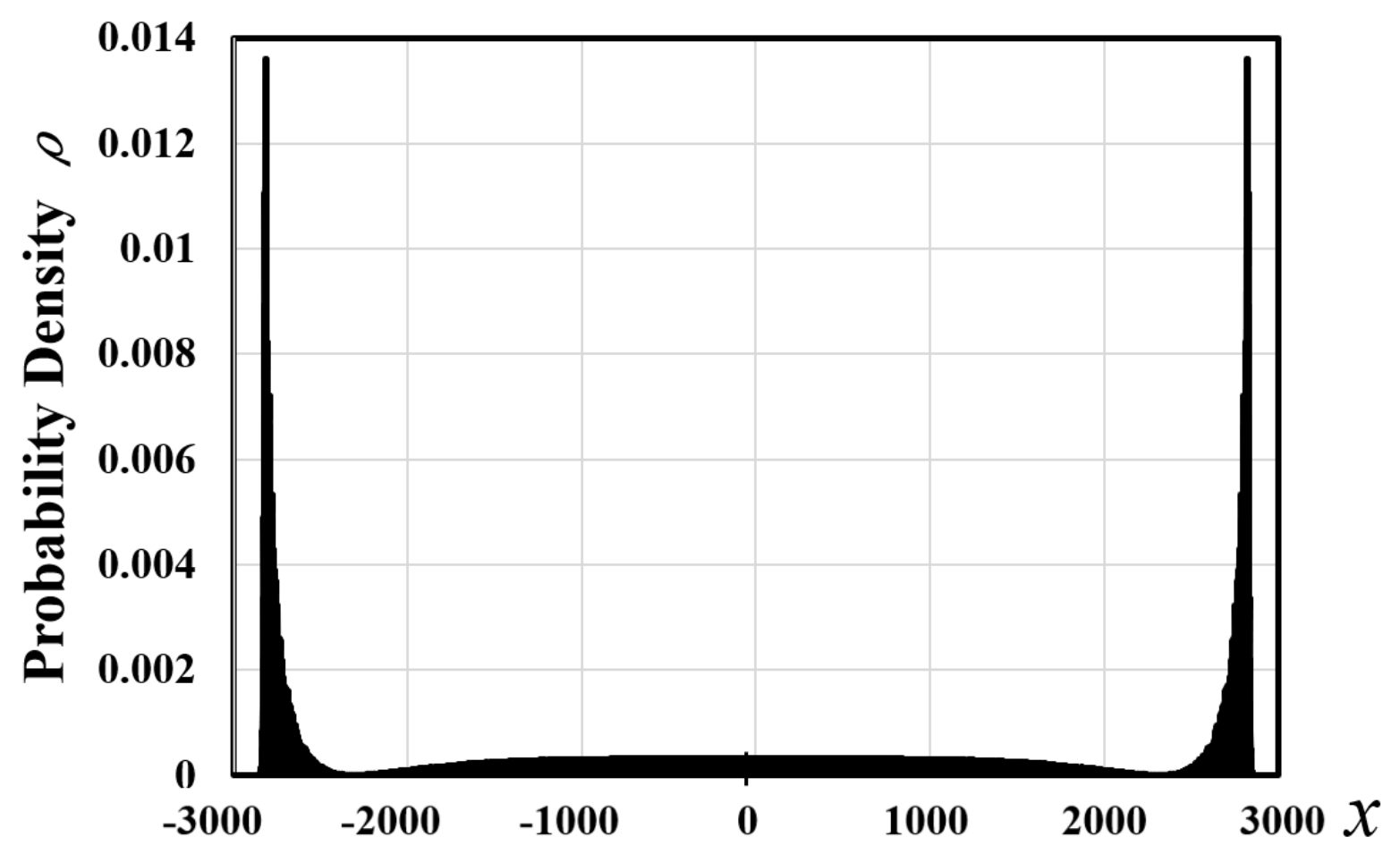}
\hskip 0.05\hsize
\includegraphics[width=0.29\hsize]{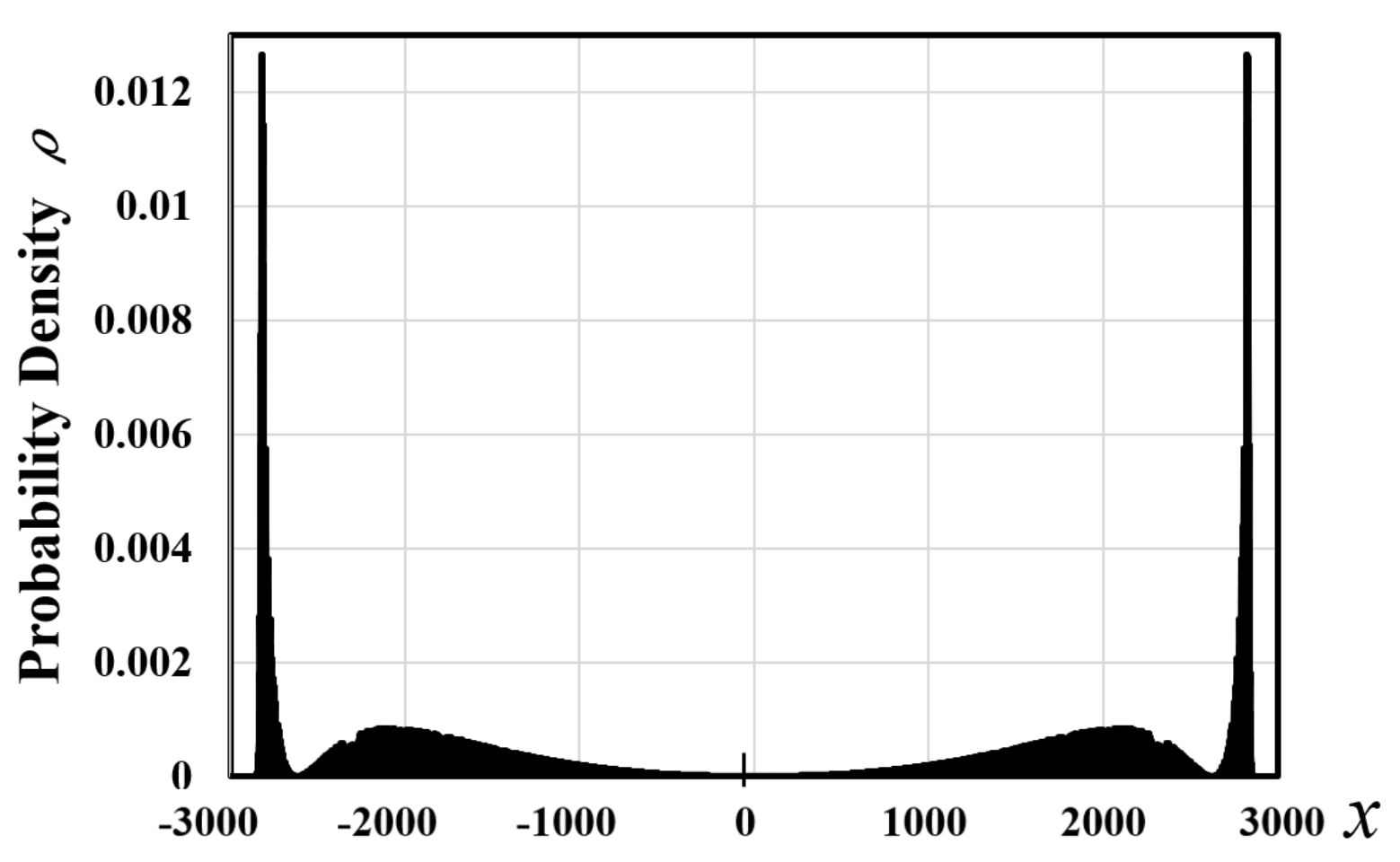}
\hskip 0.05\hsize
\includegraphics[width=0.29\hsize]{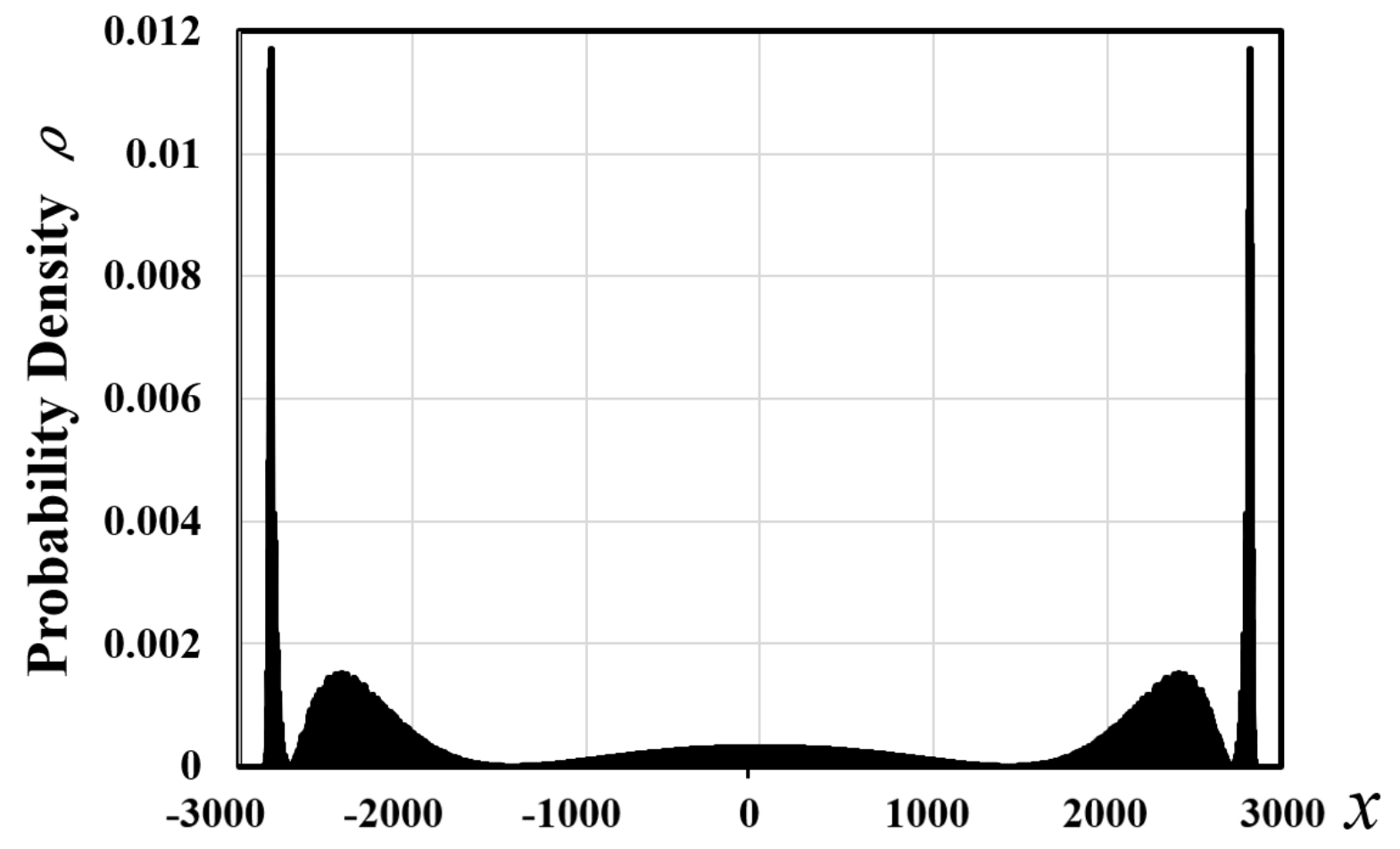}

(a)
\hskip 0.29\hsize
(b)
\hskip 0.29\hsize
(c)

\includegraphics[width=0.29\hsize]{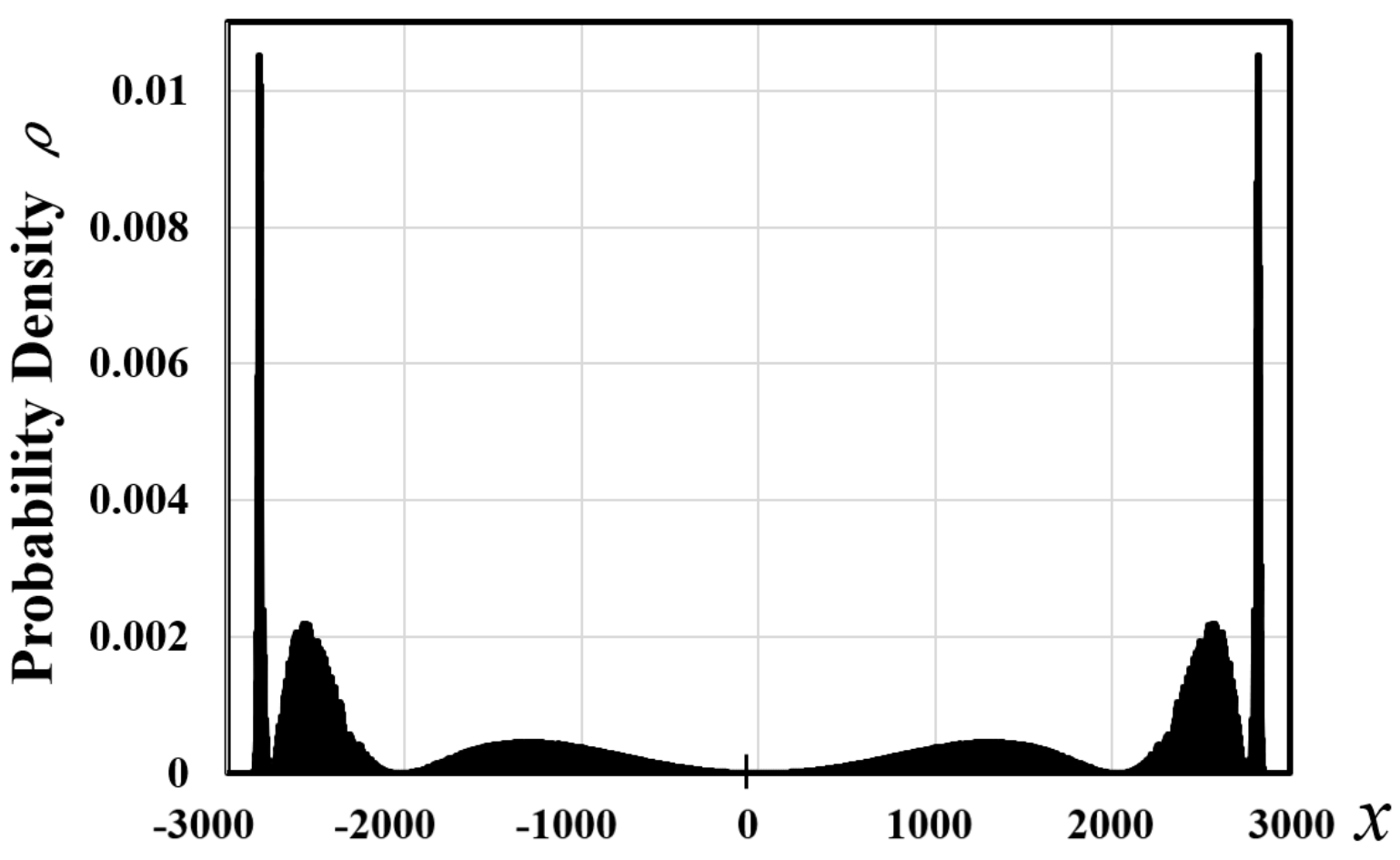}
\hskip 0.05\hsize
\includegraphics[width=0.29\hsize]{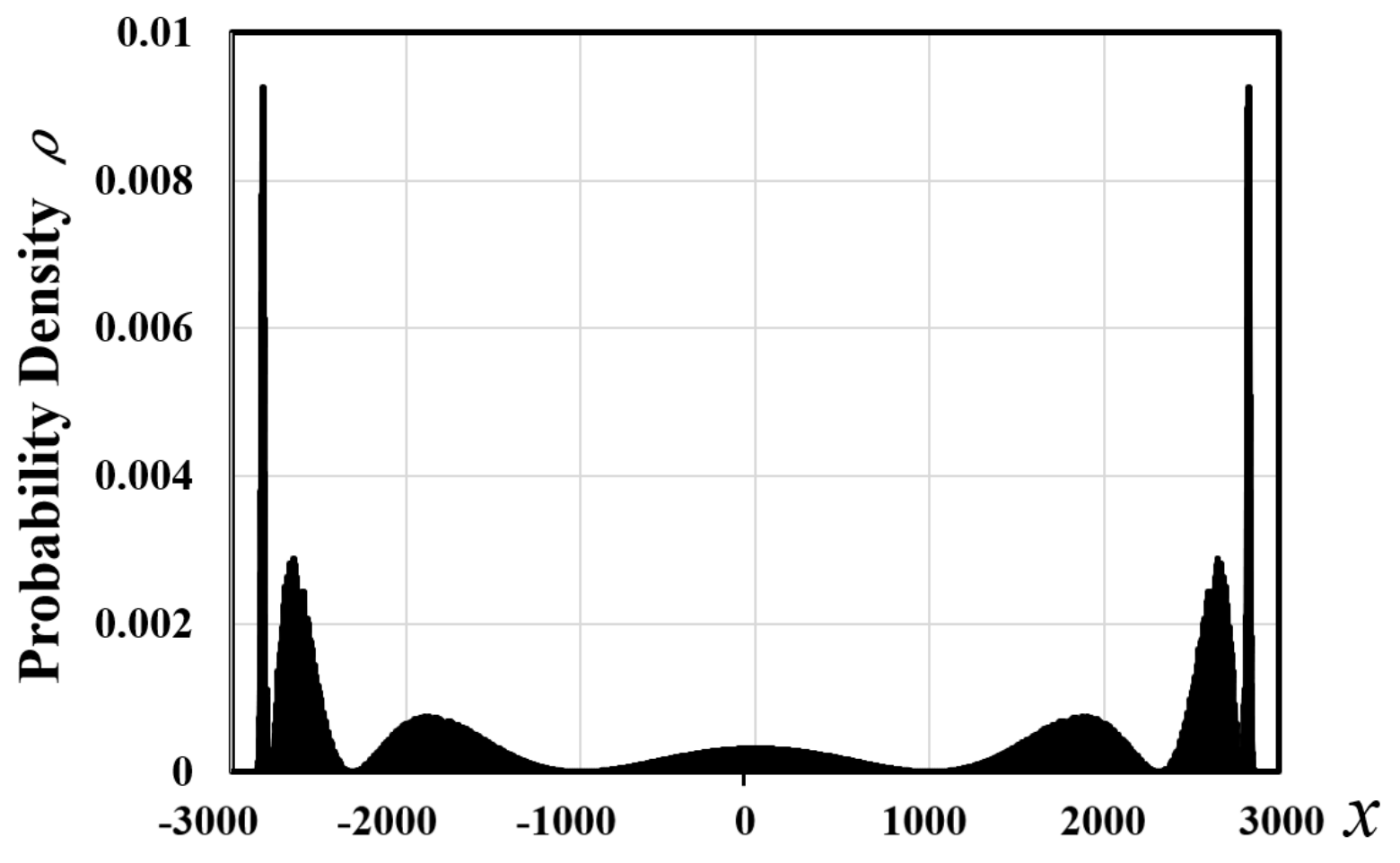}
\hskip 0.05\hsize
\includegraphics[width=0.29\hsize]{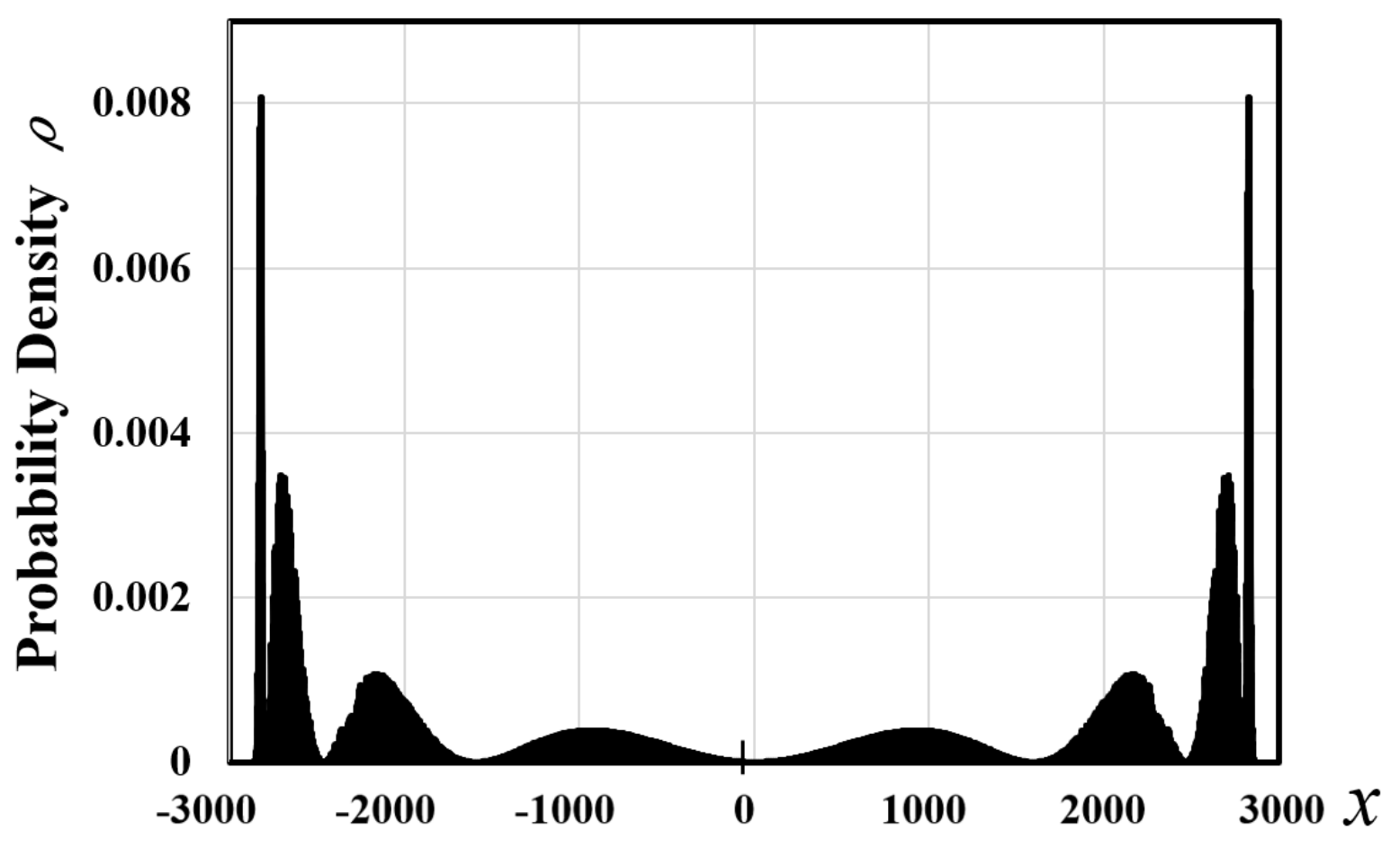}

(d)
\hskip 0.29\hsize
(e)
\hskip 0.29\hsize
(f)

\smallskip

\end{center}

\caption{
Double slit model in the QW. 
(a)$\ell=2$, (b)$\ell=3$, (c)$\ell=4$,
(d)$\ell=5$, (e)$\ell=6$, (f)$\ell=7$.
}\label{fg:WslitAS}
\end{figure}

\begin{theorem}\label{thm:Wslit}
For $\ell  \in \mathbb{Z}_{>0}$, we consider the Hadamard walk $X_t$ whose initial state starts from the two sites $-\ell $ and $\ell $:
$$
\Psi(0,x)=\frac{\varphi}{\sqrt{2}}\delta_{x,-\ell }+\frac{\varphi}{\sqrt{2}}\delta_{x,\ell},
$$
where $\varphi=\left[\begin{matrix}\alpha\\ \beta\end{matrix}\right]\in\mathbb{C}^{2}$ with $|\alpha|^{2}+|\beta|^{2}=1$. Then
$$
\frac{X_{t}}{t}\Rightarrow Z_{\ell }\quad (t \to\infty).
$$
Here the density function $f^{(\ell )}(x)$ of $Z_\ell $ is equal to
\[
f^{(\ell )}(x)=2\,T_{\ell}\!\Bigl(-\frac{x}{\sqrt{1-x^{2}}}\Bigr)^{2}\,f_{K}(x),
\]
where $T_{\ell}(\cdot)$ is the Chebyshev polynomial of the first kind.
\end{theorem}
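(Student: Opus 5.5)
The plan is to rerun the GJS method of Theorem \ref{th:GJS} with a $k$-dependent initial amplitude, and then reduce the resulting density to the computation of Proposition \ref{pr:KonnoDist}.

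\textbf{Step 1: Fourier transform of the initial state.} The two-site initial state transforms to
$$
\hPsi(0,k)=\frac{e^{ik\ell}+e^{-ik\ell}}{\sqrt2}\,\varphi=\sqrt2\cos(\ell k)\,\varphi .
$$
This is a trigonometric polynomial in $k$, so it is smooth. The moment computation preceding Theorem \ref{th:GJS} still goes through: applying $D^r$ to $U(k)^t\hPsi(0,k)$ produces the leading term $(t)_r\lambda_j^{t-r}(D\lambda_j)^r(\,v_j,\hPsi(0,k))_{k,0}\,v_j$. The derivatives that fall on $\cos(\ell k)$, on $v_j(k)$, or on the coefficients only contribute $O(t^{r-1})$.

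Hence Theorem \ref{th:GJS} applies verbatim. It gives $X_t/t\Rightarrow h(Z)$, where $Z$ has law
$$
2\cos^2(\ell k)\,\bigl|(v_j(k),\varphi)_k\bigr|^2\,\frac{dk}{2\pi}\quad\text{on } \KK\times\{1,2\}.
$$
The total mass is $\int 2\cos^2(\ell k)\,dk/2\pi=1$, which is consistent with normalization.

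\textbf{Step 2: change of variables $y=h_j(k)$.} For the Hadamard $U(k)$ of Proposition \ref{pr:KonnoDist}, I will use the eigenvalue computation from Appendix B. The characteristic equation is $\lambda^2-\sqrt2\, i\sin k\,\lambda-1=0$. From it one finds
$$
h_j(k)=\pm\frac{\cos k}{\sqrt{1+\cos^2 k}},
$$
with the sign determined by $j$. Inverting gives $\cos^2 k=y^2/(1-y^2)$, and on each branch $\cos k=\mp y/\sqrt{1-y^2}$.

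Pushing the measure forward, the density of $Y=h(Z)$ at $y$ is the sum over the preimages $(k,j)$ of
$$
2\cos^2(\ell k)\,\bigl|(v_j,\varphi)_k\bigr|^2\,\frac{1}{2\pi\,|h_j'(k)|}.
$$

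\textbf{Step 3: factoring out the interference term.} Use $\cos(\ell k)=T_\ell(\cos k)$ together with the parity $T_\ell(-u)^2=T_\ell(u)^2$. Then $2\cos^2(\ell k)=2T_\ell\!\bigl(-y/\sqrt{1-y^2}\bigr)^2$ for every preimage of $y$, regardless of which sign branch the preimage lies on. This common factor comes out of the sum.

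What remains is exactly the pushforward computed in Appendix B for Proposition \ref{pr:KonnoDist}. That is the Konno density $f_K$, tilted by $\varphi$, and it is what $f_K(x)$ denotes in the statement. This yields $f^{(\ell)}(x)=2T_\ell(-x/\sqrt{1-x^2})^2f_K(x)$.

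\textbf{Main obstacle.} I expect the main point to check is the claim in Step 3 that all preimages of a given $y$ share the same value of $\cos^2 k$. Otherwise the factor $\cos^2(\ell k)$ would not factor out, and one would get a weighted combination instead of a product.

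To verify this, I would first check that $h_j$ depends on $k$ only through $\cos k$. Then, for a fixed sign of $y$, the preimages are $k$ and $-k$ on one branch and possibly $\pi\pm k$ on the other. All of these give the same $\cos^2 k$, and, via $T_\ell(-u)^2=T_\ell(u)^2$, the same $\cos^2(\ell k)$.

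A secondary bookkeeping issue is making sure the Jacobian and eigenvector weights match Appendix B exactly, so that the tilt factor in $f_K$ is reproduced unchanged.
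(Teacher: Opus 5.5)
Your proposal is correct and follows essentially the same route as the paper: Fourier-transform the two-site initial state to $\sqrt{2}\cos(\ell k)\varphi$, apply the GJS theorem with weight $2\cos^2(\ell k)\,|(v_j(k),\varphi)_k|^2\,\frac{dk}{2\pi}$, then pull the common factor $2\cos^2(\ell k(y))$ out of the sum over the four preimages $\pm k(y)$, $\pm(\pi-k(y))$ and reduce to the Appendix B computation. The paper uses $\cos^2(\ell(\pi-k))=\cos^2(\ell k)$ implicitly when it factors out $\cos^2(\ell k(y))$; your appeal to the evenness of $T_\ell^2$, and your remark that the $k$-dependent initial amplitude only adds lower-order terms to the moment expansion, just make that step and the applicability of GJS explicit.
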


\begin{proof}
We use the GJS method to prove this theorem as the extension of the proof of Proposition \ref{pr:KonnoDist} in Appendix B.

The Fourier transform $\mathcal{F}\colon (\CC^2)^{\ZZ} \to (\CC^2)^{\KK}$ is defined by
$\widehat{\Psi}(t,k) := \sum_{x\in\ZZ} \Psi(t,x)\, e^{-ikx}$ for any $\Psi(t,\cdot) \in (\CC^2)^{\ZZ}$ and $k \in \KK$.
Then $\widehat{\Psi}(t,k) = U(k)^{t}\,\widehat{\Psi}(0,k)$, where
$U(k) = \left[\begin{matrix} e^{ik} & 0 \\ 0 & e^{-ik} \end{matrix}\right] U$,
and $\widehat{\Psi}(0,k)$ is described by

\[
\widehat{\Psi}(0,k)
=\sum_{x\in\mathbb{Z}} e^{-ikx}\Psi(0,x)
=\left(  \frac{e^{-ik\ell}}{\sqrt{2}} + \frac{e^{ik\ell}}{\sqrt{2}}  \right)\varphi
=\sqrt{2}\cos(\ell k)\,\varphi.
\]
Let $Y:=Z_\ell$ be the weak limit of $X_t/t$ as $t \to \infty$, which exists by Theorem \ref{th:GJS}.
Let $k(y)\in(0,\pi)$ be the unique solution in $(0,\pi)$ of
\[
y=h_{1}(k(y))
=-\frac{\cos (k(y))}{\sqrt{1+\cos^{2}(k(y))}},
\]
i.e., $k(y)=\arccos\bigl(-y/\sqrt{1-y^{2}}\bigr)$. Then the distribution function of $Y$ is given by
\[
P(Y\le y)=
\int_{-k(y)}^{k(y)}
2\cos^{2}(\ell k)\,p_{1}^{(1)}(k)\,\frac{dk}{2\pi}
+\left(
\int_{-\pi}^{-\pi+k(y)}+\int_{\pi-k(y)}^{\pi}
\right)
2\cos^{2}(\ell k)\,p_{2}^{(1)}(k)\,\frac{dk}{2\pi},
\]

where $p_{j}^{(1)}(k)=\bigl|(v_{j}(k),\varphi)\bigr|^{2}$.

By differentiation, the corresponding density function becomes

\begin{eqnarray*}
f^{(\ell)}(y)&=&\frac{d}{dy}P(Y\le y)\\
&=&\frac{1}{2\pi}\,2\cos^{2}\bigl(\ell k(y)\bigr)
\Bigl\{
p_{1}^{(1)}(k(y))+p_{1}^{(1)}(-k(y))
+p_{2}^{(1)}(-\pi+k(y))+p_{2}^{(1)}(\pi-k(y))
\Bigr\}
\frac{dk(y)}{dy}.
\end{eqnarray*}

Here we use
\[
\cos ( k(y) )=-\frac{y}{\sqrt{1-y^{2}}},\qquad
\frac{dk(y)}{dy}=\frac{1}{(1-y^{2})\sqrt{1-2y^{2}}},
\]
and substitute the quantum state (quantum coin) dependent weighted density for the single-point initial state:
\[
f_{K}(y)=
\Bigl\{
1-(|\alpha|^{2}-|\beta|^{2}
+\alpha\overline{\beta}+\overline{\alpha}\beta)\,y
\Bigr\}
\frac{1}{\pi(1-y^{2})\sqrt{1-2y^{2}}}.
\]
Then we finally obtain
\begin{equation}
f^{(\ell)}(y)
=2\cos^{2}\!\bigl(\ell \,k(y)\bigr)\,f_{K}(y),
\qquad |y|<\frac{1}{\sqrt{2}}.
\label{eq:fKcos}
\end{equation}
Normalization follows from
\[
\int_{-\pi}^{\pi}\!\Bigl(\lvert
(v_1(k),\widehat{\Psi}(0,k))_k\rvert^2
+\lvert( v_2(k),\widehat{\Psi}(0,k))_k\rvert^2\Bigr)\,\frac{dk}{2\pi} = \int_{-\pi}^{\pi}2\cos^{2}(\ell k)\,\frac{dk}{2\pi}=1 \quad(\ell\ge1),
\]
where we have used $p_{1}^{(1)}(k)+p_{2}^{(1)}(k)=1$.
\end{proof}

\begin{corollary}\label{cor:4.2}
The following formulas are obtained by expanding the interference factors from (\ref{eq:fKcos}) by means of Chebyshev polynomials for $\ell=1,\ldots,7$.
\begin{enumerate}

\item $\displaystyle{
f^{(1)}(x) =\frac{2 x^2}{1-x^2} f_K(x),
}$

\item $\displaystyle{
f^{(2)}(x) =\frac{2(1-3 x^2)^2}{(1-x^2)^2} f_K(x),
}$

\item $\displaystyle{
f^{(3)}(x) =\frac{2x^2(3-7 x^2)^2}{(1-x^2)^3} f_K(x),
}$

\item $\displaystyle{
f^{(4)}(x) =\frac{2(17 x^4-10 x^2+1)^2}{(1-x^2)^4} f_K(x),
}$

\item $\displaystyle{
f^{(5)}(x) =\frac{2x^2(41 x^4-30 x^2+5)^2}{(1-x^2)^5} f_K(x),
}$

\item $\displaystyle{
f^{(6)}(x) =\frac{2(99 x^6-87 x^4+21 x^2-1)^2}{(1-x^2)^6} f_K(x),
}$

\item $\displaystyle{
f^{(7)}(x) =\frac{2x^2(239 x^6 - 245 x^4 + 77 x^2 - 7)^2}{(1-x^2)^7} f_K(x).
}$
\end{enumerate}
\end{corollary}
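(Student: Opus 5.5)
The plan is to specialize the interference factor in (\ref{eq:fKcos}) of Theorem \ref{thm:Wslit}. That factor is $2\cos^2(\ell k(x))$ with $\cos k(x) = -x/\sqrt{1-x^2}$. Since $\cos(\ell k) = T_\ell(\cos k)$, we have
\[
f^{(\ell)}(x) = 2\,T_\ell\!\Bigl(-\frac{x}{\sqrt{1-x^2}}\Bigr)^2 f_K(x),
\]
so the whole corollary reduces to computing $T_\ell(c)^2$ at $c=-x/\sqrt{1-x^2}$ as a rational function of $x$ for $\ell=1,\dots,7$. Nothing further about the walk itself is needed: the GJS analysis and the formula for $f_K$ are already contained in Theorem \ref{thm:Wslit} and Proposition \ref{pr:KonnoDist}.

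To organize the computation and avoid square roots, I will define
\[
N_\ell(x) := (1-x^2)^{\ell/2}\, T_\ell\!\Bigl(-\frac{x}{\sqrt{1-x^2}}\Bigr).
\]
Multiplying the Chebyshev recursion $T_{\ell+1}(c) = 2c\,T_\ell(c) - T_{\ell-1}(c)$ by $(1-x^2)^{(\ell+1)/2}$ gives a polynomial recursion with initial values:
\[
N_0 = 1,\qquad N_1 = -x,\qquad N_{\ell+1}(x) = -2x\,N_\ell(x) - (1-x^2)\,N_{\ell-1}(x).
\]
By induction each $N_\ell$ is a polynomial in $x$ of parity $\ell$. Hence $N_\ell = x\,M_\ell(x^2)$ for $\ell$ odd and $N_\ell = M_\ell(x^2)$ for $\ell$ even, and
\[
2\cos^2(\ell k(x)) = \frac{2N_\ell(x)^2}{(1-x^2)^\ell}.
\]
This accounts for the denominators $(1-x^2)^\ell$ and for the extra factor $x^2$ in the odd cases listed in Corollary \ref{cor:4.2}.

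I then run the recursion for $\ell=2,\dots,7$. For example, $N_2 = 3x^2-1$ and $N_3 = -2x(3x^2-1) + x(1-x^2) = x(3-7x^2)$, which reproduce items (2) and (3). Continuing the recursion gives $N_4,\dots,N_7$; overall signs of the $N_\ell$ are irrelevant because only their squares enter.

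The only real obstacle is arithmetic bookkeeping in the higher cases $\ell=5,6,7$. I would guard against slips with two checks. First, evaluate at a convenient point such as $x=1/2$, where $\cos k = -1/\sqrt3$, and compare $(1-x^2)^{-\ell}N_\ell(x)^2$ with $\cos^2(\ell k)$ computed directly. Second, use the leading-coefficient identity $N_\ell(1) = (-1)^\ell \cdot(\text{leading behaviour})$, equivalently the sum of the polynomial's coefficients, which follows from the recursion at $x=1$: $N_{\ell+1}(1) = -2N_\ell(1)$, so $N_\ell(1) = (-2)^\ell$. For instance, for $\ell=7$ this gives $239-245+77-7 = 64 = 2^6$, consistent with the extracted factor $x$.
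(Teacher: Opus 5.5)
Your proposal is correct and takes the same route as the paper: rewrite $2\cos^2(\ell k(x))$ as $2T_\ell(-x/\sqrt{1-x^2})^2$ and expand. Your homogenized recursion $N_{\ell+1}=-2xN_\ell-(1-x^2)N_{\ell-1}$ organizes this cleanly and reproduces all seven items, including $N_4,\dots,N_7$. One cosmetic slip in your sanity check: at $x=1$ the recursion starts from $N_1(1)=-1$, so $N_\ell(1)=(-1)^\ell 2^{\ell-1}$ for $\ell\ge 1$, not $(-2)^\ell$; your own $\ell=7$ value $64=2^6$ agrees with the corrected formula.
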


The shapes of the profile are illustrated in Figure \ref{fg:fell}.

\bigskip
\begin{figure}
\begin{center}

\includegraphics[width=0.29\hsize]{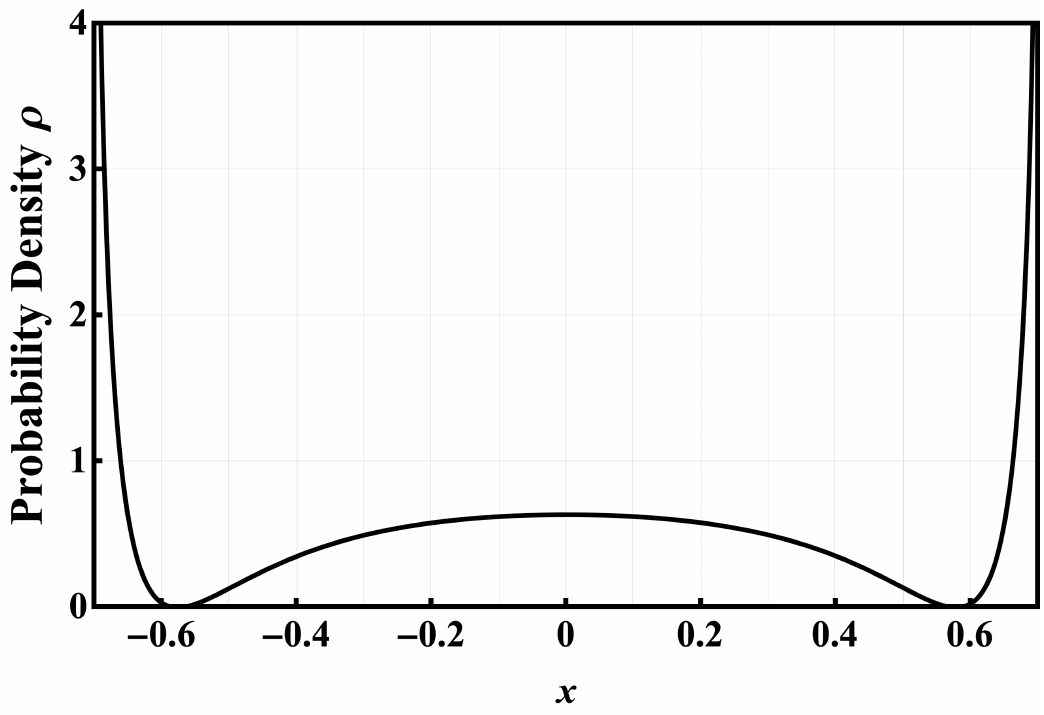}
\hskip 0.05\hsize
\includegraphics[width=0.29\hsize]{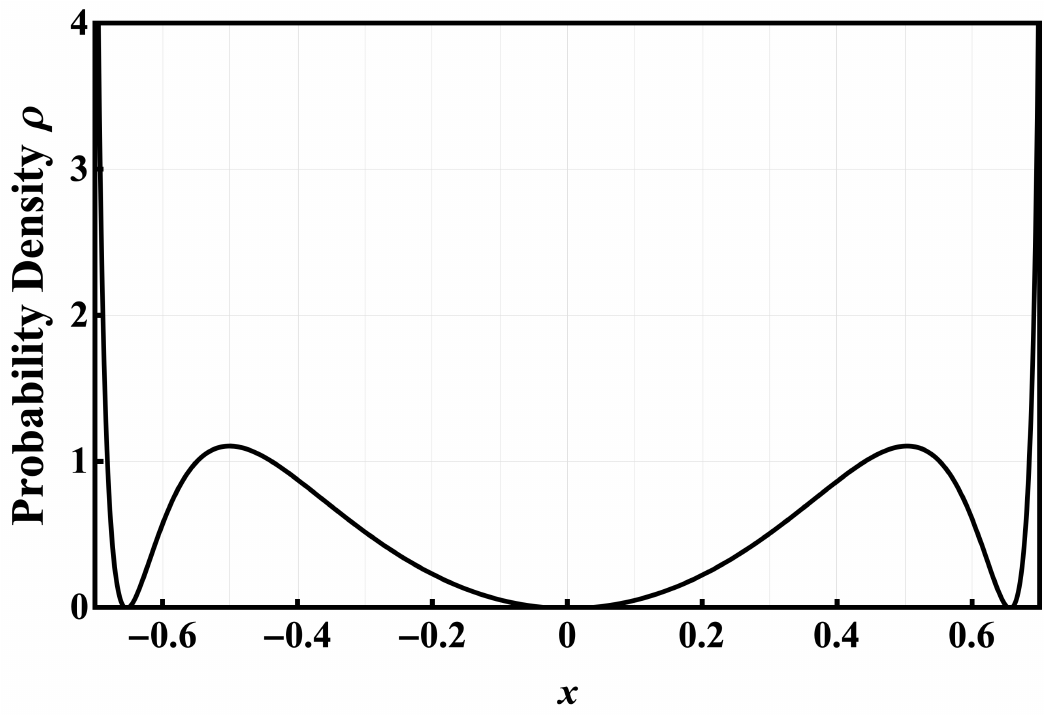}
\hskip 0.05\hsize
\includegraphics[width=0.29\hsize]{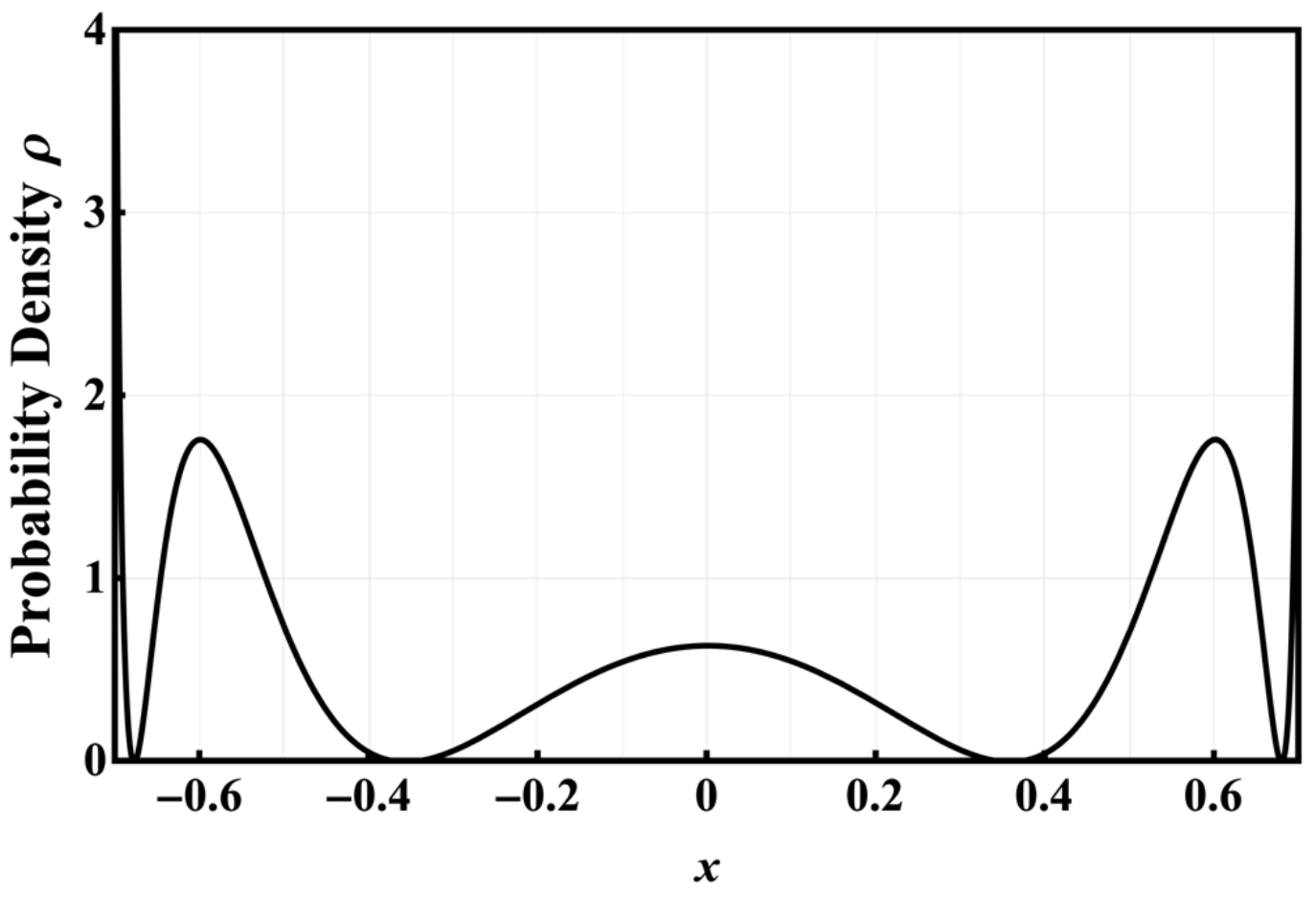}

(a)
\hskip 0.29\hsize
(b)
\hskip 0.29\hsize
(c)

\includegraphics[width=0.29\hsize]{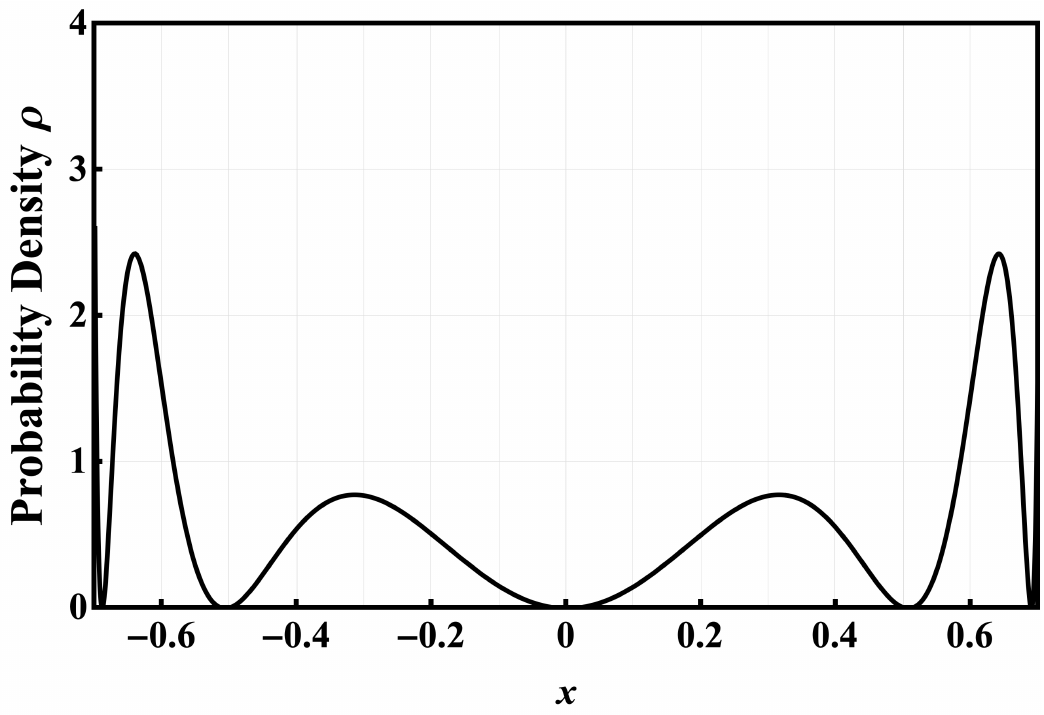}
\hskip 0.05\hsize
\includegraphics[width=0.29\hsize]{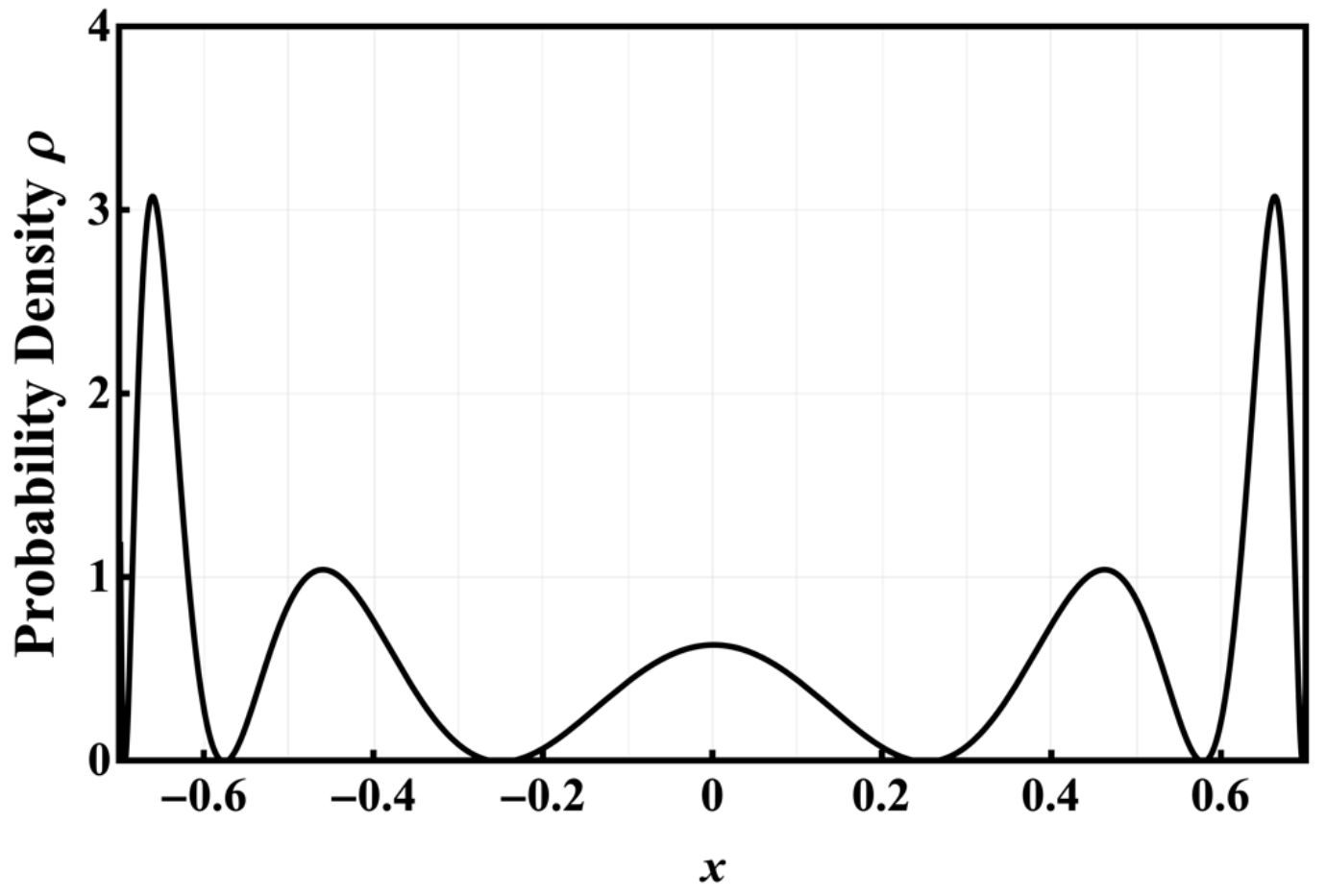}
\hskip 0.05\hsize
\includegraphics[width=0.29\hsize]{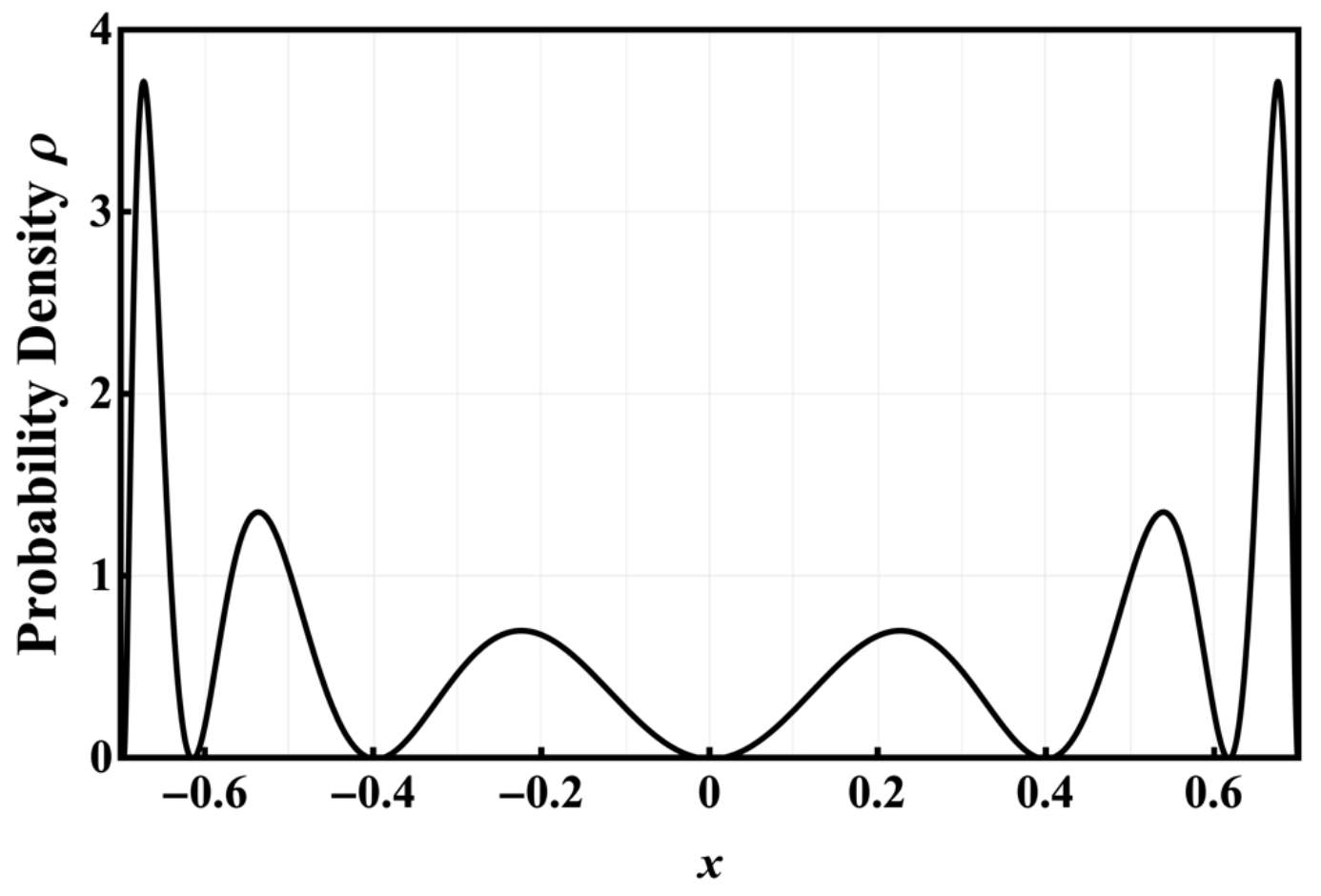}

(d)
\hskip 0.29\hsize
(e)
\hskip 0.29\hsize
(f)

\smallskip

\end{center}

\caption{
Double slit model in the QW. 
(a)$\ell=2$, (b)$\ell=3$, (c)$\ell=4$,
(d)$\ell=5$, (e)$\ell=6$, (f)$\ell=7$.
}\label{fg:fell}
\end{figure}

\begin{remark}\label{rk:4.3}
{\rm{
We demonstrate that the QW provides an interference phenomenon by considering quantum states stemming from two points.
Depending on the distance $2\ell$ between the double slit, the profile of the  probability density $\rho(t,x)$ exhibits a different aspect as in Figure \ref{fg:WslitAS}.

We also theoretically obtain the weak limit expression of the asymptotic behavior by means of the GJS-method, as in Theorem \ref{thm:Wslit}.
The typical cases are listed in Corollary \ref{cor:4.2}, which are illustrated in Figure \ref{fg:fell}.

Figure \ref{fg:fell} describes well the numerical evaluations in Figure \ref{fg:WslitAS}.
In other words, we demonstrate that the QW model also provides the interference, numerically and theoretically besides the particle-like behavior in Section 3.

It is remarkable that as far as we know, no previous reports have documented instances of QW exhibiting interference effects. We believe this is an extremely significant discovery.

\bigskip

Before finishing this remark, we give comments on its mathematical meaning.
As in proof of Theorem \ref{thm:Wslit}, the profile $f^{(\ell)}(x)$ comes from the Chebyshev polynomials $T_\ell$,
i.e., $T_\ell(x) = \displaystyle{ \sum_{i=0}^\ell a_{\ell, i} x^i}$ is generated by $T_\ell(\cos u) = \cos(\ell u)$.

Intensity is fixed by the interference condition, such as $\cos(\ell k)$.

Behind the cosine function, we may find the cyclotomic field $\QQ[\mathrm{e}^{2\pi i/\ell}]$, i.e., $T_\ell\left(\cos\frac{u}{\ell}\right)=\cos u$ in number theory, which plays an essential role in near-field interference phenomena and the Talbot effect \cite{BK, MO} from a mathematical viewpoint \cite{LW}.

}}
\end{remark}

\section{Conclusion and Discussion}\label{sec:5}

As we demonstrated the computations of the QW model, a quantum mechanical model, from a quantum mechanical perspective, we first comment on its relationship with Feynman's path integral.
As mentioned in Remark \ref{rmk:PIvsQW}, the picture in (\ref{eq:PsiKPsi}) plays the essential role in both the path integral and the QW model.
The QW is a simple but rigorous mathematical model to express the relation (\ref{eq:PsiKPsi}).
Thus, we should re-evaluate the behaviors of the QW model from the quantum mechanical perspective.

\bigskip

As mentioned in Introduction, although there are many quantum models, few demonstrate classical correspondence through a parameter as far as we know.
In QWs, we recover classical particles by taking the limit of the quasi-inner scattering parameter $s$, $s\to0$.
Although the QW model is not directly related to the Planck constant $\hbar$, we can interpret $s$ as a parameter bridging particle-like and wave-like properties.

\begin{center}
\begin{tabular}{|c|c|}
\hline
small $s$ & large $s$ \\
\hline
classical particle-like & wave-like \\
\hline
\hline
velocity 
to 1 [space step/time step] &velocity  to zero \\
\hline
  \end{tabular}
\smallskip
\end{center}

The fact that the velocity of the wavefront (the peak of the wave profile) is less than the velocity of information propagation is noteworthy.
This reminds us of the pilot wave of Bohm and the relation between the group velocity and phase velocity, although, the QW model does not provide a deterministic picture or hidden variables.

\bigskip

Furthermore, it is interesting that the wavefront in the Konno distribution is defined by the divergence point, even though its integration is finite.
This is reminiscent of caustic phenomena in optics, which were studied well by Berry and Upstill \cite{BU}.
The famous caustic phenomenon is a rainbow in the sky. The mechanism behind it was revealed by Descartes and Newton.
It is well known that Airy resolved the divergence of the rainbow, or caustics, in the framework of geometric optics by investigating the wave effect \cite{BU}
as we redisplay the Fig. 3.2 in \cite{BU}, which provides a comparison of the Airy function and the observation of diffraction intensity at a fold caustic, as in Figure \ref{fg:BU}.
Figure \ref{fg:ST_dist} closely resembles the intensity of the fold caustic in Figure \ref{fg:BU} and the intensity distribution of a rainbow derived by Airy based on wave theory. 
Additionally, the divergence of the Konno distribution exhibits the same behavior as a rainbow or caustic \cite{A,JR}. 
The Airy function refined the crests of geometrical optics with the wavy effect, as Sunada and Tate did \cite{ST} cf. Remark \ref{rmk:SunadaTate}.

As Schulman investigated the treatment of the caustics in the path integral method \cite{Schul}, the profile of the probability density in the QW model is a very interesting problem itself.

\begin{figure}
\begin{center}

\includegraphics[width=0.7\hsize]{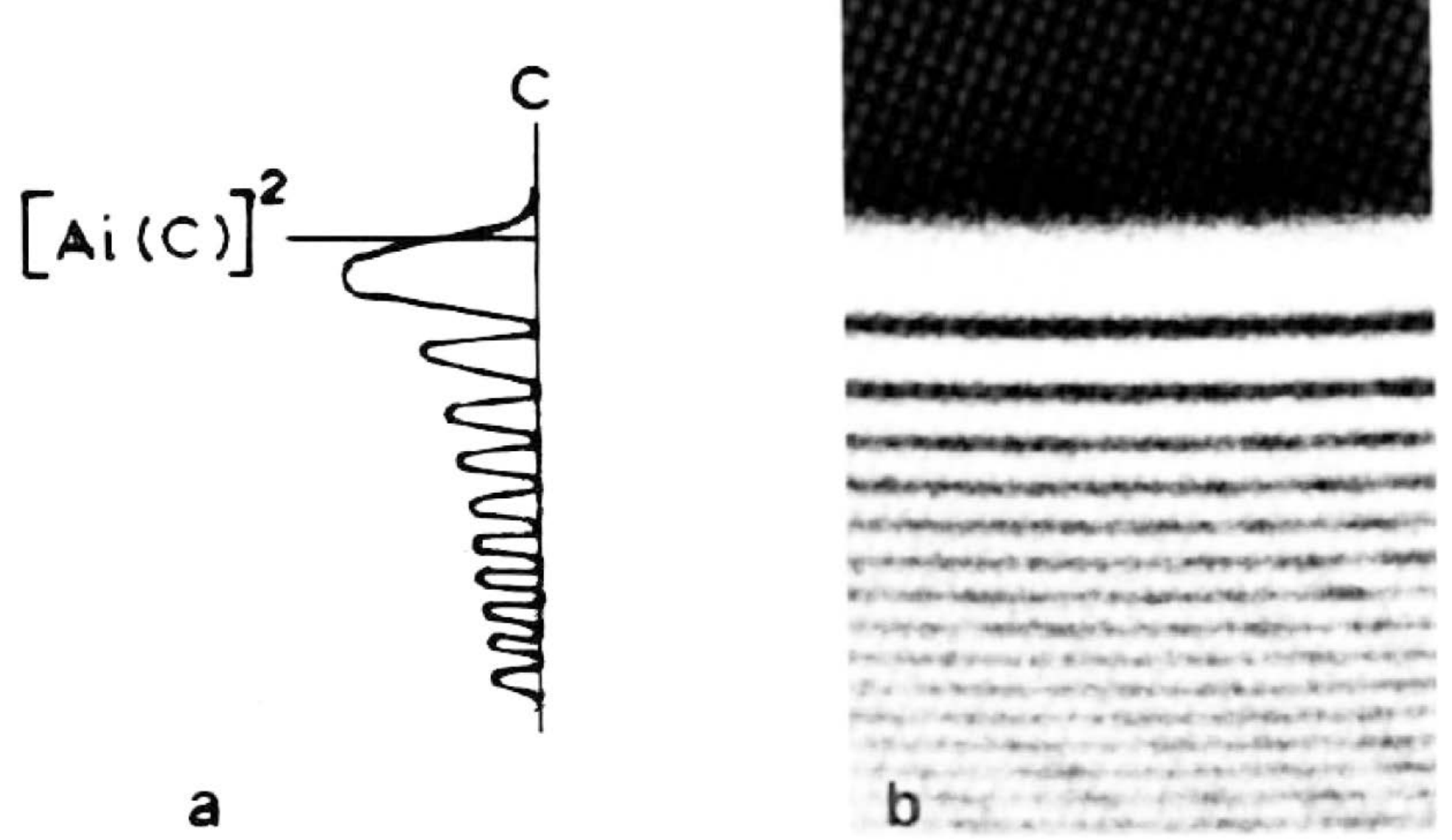}
\

\smallskip

\end{center}

\caption{
Fig. 3.2 in \cite{BU}:
(a) Graph of the fold diffraction catastrophe intensity (square of Airy function); (b) observation of diffraction intensity at a fold caustic.
}
\label{fg:BU}
\end{figure}

\bigskip

Further, although the interference effect has never been reported in the QW in the long term, in this paper, we also demonstrated our observation of double-slit interference in the QW in Section \ref{sec:4} with the rigorous proof as in Theorem \ref{thm:Wslit}.
We believe this is an extremely significant discovery.
This phenomenon shows both naturalness and surprise and can be interpreted as the Fraunhofer type, far-field diffraction as mentioned in Section \ref{sec:4}.

\bigskip

Thus, the QW model is a natural mathematical toy model for exploring the correspondence between classical and quantum mechanics.
The QW model contains the quasi-inner scattering parameter $s$, which connects classical and quantum behavior to the path integral picture.
This correspondence may be associated with the well-known caustic phenomenon and can be investigated more precisely using singularity theory and singular perturbation method \cite{BU,Schul}.
Furthermore, the QW model exhibits the interference phenomenon well.
In other words, as demonstrated in this paper, the QW model exhibits wave-particle complementarity.
From a mathematical perspective, the QW model is essential for understanding the distinction between quantum mechanics and classical mechanics, as well as the role of the path integral in that distinction.

This also has the potential to solve various fundamental problems that were impossible to resolve during the formative period of quantum mechanics due to a lack of mathematical advances.
We believe that now is the time to understand and achieve these goals from a contemporary perspective.

\bigskip
Before finishing this paper, we give some more comments on the QW model from a mathematical viewpoint.

As a particle picture, we mentioned the caustics above.
The interference in Section \ref{sec:4} also reminds us of the wavy caustic problem, i.e., the rational Talbot effect in near-field optical phenomena, which Berry and Klein studied well \cite{BK}.
The rational Talbot effect is closely related to the theory of quadratic residues in number theory, and number-theoretic properties describe wave phenomena in nature.
Furthermore, we observe wave-particle complementarity in the rational Talbot effect, which manifests as the quadratic reciprocity law in number theory through the most straightforward number-theoretic path integral, the Gauss sum \cite{MO}.
Our interference problem is also related to the cyclotomic field as mentioned in Remark \ref{rk:4.3} \cite{LW}.
In other words, behind the interference phenomenon in Section 4, there might be number theoretical background.

Recently, the QW model was extended to quantum states on a graph, which are related to zeta functions in number theory \cite{AKOS}.
A century has passed since the establishment of quantum mechanics. Physics and mathematics have both advanced in this time, leading to the surprising development of remarkable theory of the QW model.
We believe that incorporating these modern mathematical perspectives into the QW model allows us to approach the quantum world from a contemporary viewpoint, tackling it as a rigorous mathematical problem.

\appendix
\section{Feynman's checkerboard Dirac model: Problem~2-6  in \cite{FHS}}

\begin{figure}
\begin{center}

\includegraphics[width=0.4\hsize]{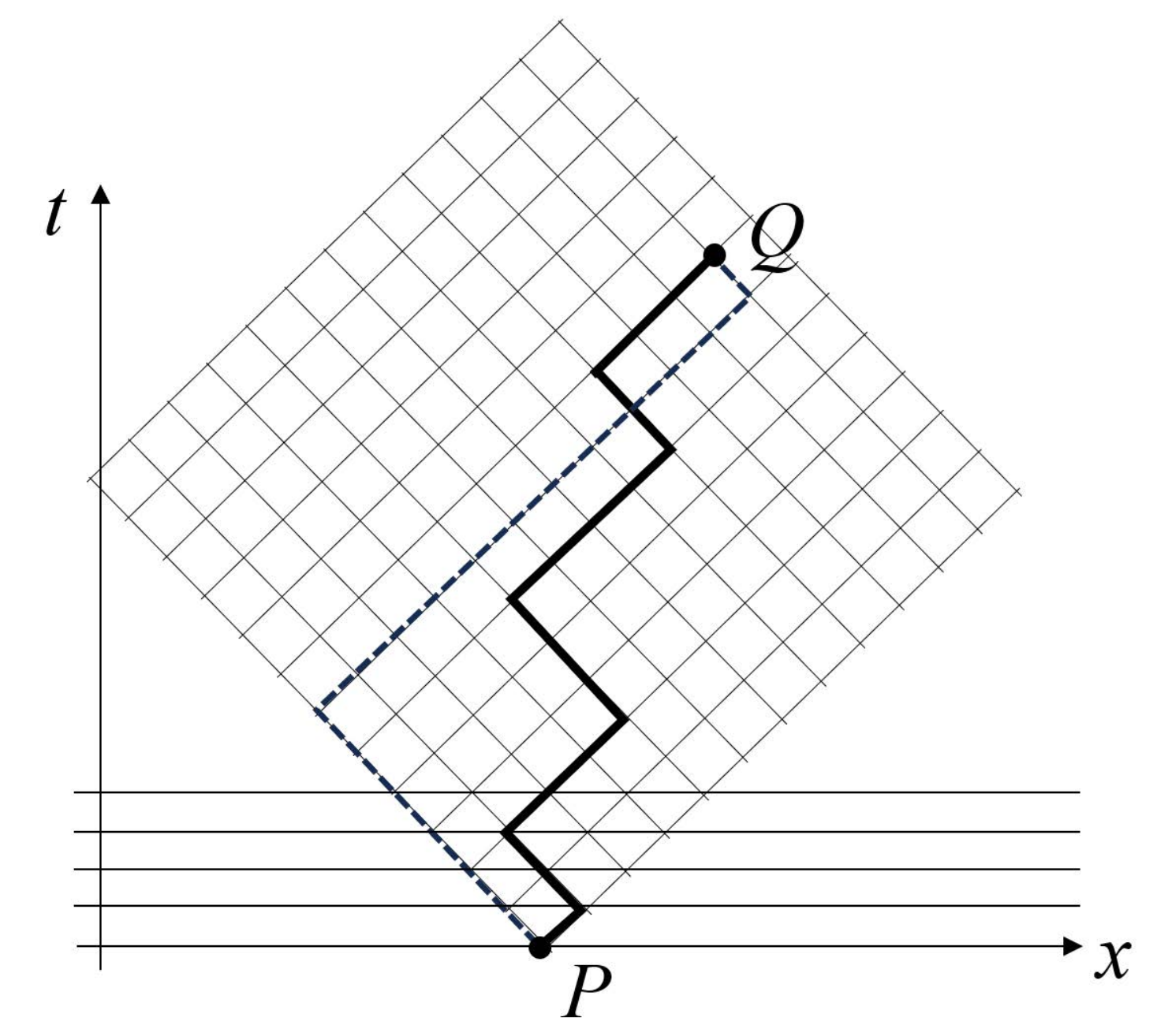}
\

\smallskip

\end{center}

\caption{
Zigzag diagram
}\label{fg:Zigzag}
\end{figure}

Feynman considered the class of functionals for which path integrals can be defined.
While the general kernel is expressed as
\[
K(b,a)=\int_{a}^{b} e^{i S[x(t)] / \hbar}\, \mathcal{D}x(t),
\]
he also introduced a distinct type of functional arising from a one-dimensional relativistic model.

\medskip

Suppose that a particle moves only forward or backward at the velocity of light.
Then, in the $(x,t)$ plane, all trajectories oscillate with slopes $\pm 45^\circ$, as schematically illustrated in Feynman's original figure as in Figure \ref{fg:Zigzag}.
Time is divided into small equal steps of length $\varepsilon$, and reversals of direction
are allowed only at the boundaries of these steps, i.e., at $t = t_a + n\varepsilon$ with $n \in \mathbb{Z}$.

\medskip

For this relativistic case, Feynman defined the amplitude of a given path not by the exponential weight
of the action but by the number of corners (reversals) along the trajectory.
Specifically, if a trajectory contains $R$ reversals, the amplitude is given by
\[
\phi = (i\varepsilon)^R.
\]
Consequently, the total transition amplitude (kernel) between spacetime points $a$ and $b$
is obtained by summing over all possible paths,
\[
K(b,a) = \sum_{R} N(R) (i\varepsilon)^R,
\]
where $N(R)$ denotes the number of distinct trajectories containing $R$ corners.

\medskip

This formulation 
forms the conceptual bridge between the path-integral formalism and the unitary dynamics on discrete spacetime lattices---precisely the foundation of the modern QW model.


\section{Proof of Proposition \ref{pr:KonnoDist}}

We compute explicitly in the case of the Hadamard walk.
The Fourier representation of the one-step evolution operator is
\[
U(k)=\frac{1}{\sqrt{2}}
\begin{bmatrix}
e^{ik}   &e^{ik} \\
e^{-ik}  &-e^{-ik}
\end{bmatrix}.
\]
Let $I=I(k)=1+\cos^2(k)$. Then the eigenvalues of $U(k)$ are given by
\[
\lambda_{1}(k) = \frac{1}{\sqrt{2}} \left( \sqrt{I} + i \sin k  \right) , \quad
\lambda_{2}(k) = \frac{1}{\sqrt{2}} \left( -\sqrt{I} + i \sin k  \right) ,
\]
and for $D=i\frac{d}{dk}$, we obtain
\[
h_{j}(k) = D\lambda_{j}(k)/\lambda_{j}(k), \quad  h_{1}(k) = -\frac{\cos k}{\sqrt{I}}, \quad h_{2}(k) = \frac{\cos k}{\sqrt{I}}.
\]

The eigenvectors can be chosen as 
\[
v_{1}(k) = \sqrt{ \frac{\sqrt{I}+\cos k}{2\sqrt{I}} }
\begin{bmatrix}
e^{ik}    \\
\sqrt{I} - \cos k 
\end{bmatrix}, 
\quad
v_{2}(k) = \sqrt{ \frac{\sqrt{I}-\cos k}{2\sqrt{I}} }
\begin{bmatrix}
e^{ik}    \\
-\sqrt{I} - \cos k 
\end{bmatrix},
\]
so that $(v_{j}(k), v_{j}(k))_k = 1$ for $j=1,2$.
Let
$$
p_{j}(k) = \lvert(v_{j}(k),\widehat{\Psi}(0,k))_k\rvert^2 \quad (j=1,2).
$$
We consider the initial state localized at the origin with the quantum state $\varphi = \displaystyle{\left[\begin{matrix}\alpha\\ \beta\end{matrix}\right]}$ satisfying $|\alpha|^{2}+|\beta|^{2}=1$; that is, 
$\Psi(0,x)= \varphi \delta_{x,0}$.
Hence 
\[
p_{j}(k) = \lvert( v_{j}(k), \varphi)_k\rvert^2 \quad (j=1,2).
\]
Here we also note that 
\[
p_{1}(k) + p_{2}(k) = 1 \quad  (k \in [-\pi,\pi) ).
\]
More generally, even without localization at the origin, the same arguments below apply provided that
$$
\int_{-\pi}^{\pi}\!\Bigl(\lvert( v_1(k), \widehat{\Psi}(0,k))_{k}\rvert^2
+\lvert(v_2(k),\widehat{\Psi}(0,k))_{k}\rvert^2\Bigr)\,\frac{dk}{2\pi} = 1.
$$
By Theorem \ref{th:GJS}, 
\[
P(Y \le y) = \int_{ \{ k \in [-\pi,\pi): h_{1}(k)\le y \} } p_{1}(k) \frac{dk}{2\pi} + \int_{ \{ k \in [-\pi,\pi): h_{2}(k)\le y \} } p_{2}(k) \frac{dk}{2\pi}.
\]
Let $k=k(y) \in (0,\pi) $ denote the unique solution of $h_{1}(k)=-\cos k / \sqrt{I}=y$ for $-1/\sqrt{2}<y<1/\sqrt{2}$; explicitly, $k(y)=\arccos\bigl(-y/\sqrt{1-y^{2}}\bigr)$.

Then
\[
P(Y \le y) = \int_{-k(y)}^{k(y)} p_{1}(k) \frac{dk}{2\pi} +  \left( \int_{-\pi}^{-\pi+k(y)} + \int _{\pi-k(y)}^{\pi}  \right)  p_{2}(k)\frac{dk}{2\pi}.
\]

Differentiating both sides yields the density $f(y)$:
\begin{align*}
f(y) &= \frac{d}{dy} P(Y \le y) \\
&= \frac{1}{2\pi} 
\left\{
p_{1}(k(y)) + p_{1}(-k(y)) + p_{2}(-\pi+k(y)) +  p_{2}(\pi-k(y)) 
\right\} 
\frac{dk(y)}{dy}.
\end{align*}

We now express the right-hand side explicitly as a function of $y$.
From the defining relation,
\[
h_{1}(k(y)) = - \frac{\cos(k(y))}{ \sqrt{1+\cos^2(k(y))} }=y,
\]
we obtain
\[
\cos(k(y)) = -\frac{y}{\sqrt{1-y^2}},  \quad
\sin(k(y)) = \sqrt{ \frac{1-2y^2}{1-y^2} }.
\]
Differentiating the identity above with respect to $y$ gives
\[
\frac{dk(y)}{dy} = \frac{1}{(1-y^2)\sqrt{1-2y^2}},
\]
and moreover 
\[
I = 1+\cos^2(k(y)) = \frac{1}{1-y^2}.
\]

Using these relations, together with the explicit eigenvectors, we find
\begin{align*}
v_{1}(k(y)) &= \frac{1}{\sqrt{2(1+y)}}
\begin{bmatrix}
-y+i\sqrt{1-2y^2}   \\
1+y 
\end{bmatrix}, \\
v_{1}(-k(y)) &= \frac{1}{\sqrt{2(1+y)}}
\begin{bmatrix}
-y-i\sqrt{1-2y^2}   \\
1+y 
\end{bmatrix},\\
v_{2}(-\pi+k(y)) &= \frac{1}{\sqrt{2(1+y)}}
\begin{bmatrix}
y-i\sqrt{1-2y^2}   \\
-1-y 
\end{bmatrix}, \\
v_{2}(\pi-k(y)) &= \frac{1}{\sqrt{2(1+y)}}
\begin{bmatrix}
y+i\sqrt{1-2y^2}   \\
-1-y 
\end{bmatrix}.
\end{align*}

Combining the above with the definitions of $p_{j}$ and $I$, one obtains the desired density

\begin{align*}
f(y ) &= \Bigl\{1-\bigl(|\alpha|^{2}-|\beta|^{2}+\alpha\overline{\beta}+\overline{\alpha}\beta\bigr)y \Bigr\}
\frac{1}{\pi (1-y^2) \sqrt{1-2y^2}} I_{(-1/\sqrt{2},\,1/\sqrt{2})}(y) \\
&= \Bigl\{1-\bigl(|\alpha|^{2}-|\beta|^{2}+\alpha\overline{\beta}+\overline{\alpha}\beta\bigr)y \Bigr\} f_{K}\!\left(y;\,\frac{1}{\sqrt{2}}\right).
\end{align*}

\section*{Acknowledgments}
The authors thank Jiro Akahori, Iwao Sato, and Yusuke Ide for their critical discussions related to this study.
The second author, S.M., is also grateful to his students, especially,
Nur Athirah Binti Ahmad, Nurfarah Hani Binti Halim, Azumi Koshino, Taiki Sato, and Hayato Tanabe.

The second author, S.M., has been supported by the Grant-in-Aid for Scientific Research (C) of the Japan Society for the Promotion of Science Grant, No.21K03289 and No.26K06824.
The third author, E.S., has been supported by the Grant-in-Aid for Scientific Research (C) of the Japan Society for the Promotion of Science Grant, No.24K06863.

\bigskip

\end{document}